\newtheorem{thm}{Theorem}
\newtheorem{lem}{Lemma}
\newtheorem{Definition}{Definition}
\newtheorem{Property}{Property}
\DeclareRobustCommand{\hlgreen}[1]{{\sethlcolor{green}\hl{#1}}}
\begin{document}

\title
{\textit{CrossCert}: A Cross-Checking Detection Approach to Patch Robustness Certification for Deep Learning Models}

\author{Qilin Zhou}
\affiliation{%
  \institution{City University of Hong Kong}
  \city{Hong Kong}
  \country{China}}
\email{qilin.zhou@my.cityu.edu.hk}

\author{Zhengyuan Wei}
\affiliation{%
  \institution{City University of Hong Kong \& The University of Hong Kong}
  \city{Hong Kong}
  \country{China}}
\email{zywei4-c@my.cityu.edu.hk}

\author{Haipeng Wang}
\affiliation{%
  \institution{City University of Hong Kong}
  \city{Hong Kong}
  \country{China}}
\email{haipeng.wang@my.cityu.edu.hk}

\author{Bo Jiang}
\affiliation{%
  \institution{
  Beihang University}
  \city{Beijing}
  \country{China}}
\email{jiangbo@buaa.edu.cn}

\author{W.K. Chan}
\affiliation{%
  \institution{%
  City University of Hong Kong}
  \city{Hong Kong}
  \country{China}}
\email{wkchan@cityu.edu.hk}
\authornote{Corresponding author}







\begin{abstract}
Patch robustness certification is an emerging kind of defense technique against adversarial patch attacks with provable guarantees. 
There are two research lines:
certified recovery and certified detection. 
They aim to label malicious samples with provable guarantees correctly
and 
issue warnings for malicious samples predicted to non-benign labels with provable guarantees, respectively.
However, existing certified detection defenders suffer from protecting labels subject to manipulation, and existing certified recovery defenders cannot systematically warn samples about their labels.
A certified defense that simultaneously offers robust labels and systematic warning protection against patch attacks is desirable.
This paper proposes a novel certified defense technique called \textit{CrossCert}. \textit{CrossCert} formulates a novel approach by cross-checking two certified recovery defenders to provide unwavering certification and detection certification. 
Unwavering certification ensures that a certified sample, when subjected to a patched perturbation, will always be returned with a benign label without triggering any warnings with a provable guarantee.
To our knowledge, \textit{CrossCert} is the first certified detection technique to offer this guarantee.
Our experiments show that, with a slightly lower performance than ViP and comparable performance with PatchCensor in terms of detection certification, \textit{CrossCert} certifies a significant proportion of samples with the guarantee of unwavering certification.

\end{abstract}


\begin{CCSXML}
<ccs2012>
<concept>
<concept_id>10011007.10010940.10011003.10011004</concept_id>
<concept_desc>Software and its engineering~Software reliability</concept_desc>
<concept_significance>500</concept_significance>
</concept>
<concept>
<concept_id>10010147.10010178</concept_id>
<concept_desc>Computing methodologies~Artificial intelligence</concept_desc>
<concept_significance>500</concept_significance>
</concept>
</ccs2012>
\end{CCSXML}

\ccsdesc[500]{Software and its engineering~Software reliability}
\ccsdesc[500]{Computing methodologies~Artificial intelligence}
\ccsdesc[500]{Security and privacy~Domain-specific security and privacy architectures}

\keywords{Certification, Verification, Deep Learning Model, Certified Robustness, Patch Robustness, Worst-Case Analysis, Security}


\authorsaddresses{%
This research is partly supported by the CityU MF\_EXT (project no. 9678180).\\ 
Authors’ addresses: Z. Zhou, H. Wang, W.-K. Chan, City University of Hong Kong, Hong Kong; emails: \{qilin.zhou, haipeng.wang\}@my.cityu.edu.hk, wkchan@cityu.edu.hk;
Z. Wei, City University of Hong Kong \& The University of Hong Kong, Hong Kong; emails: zywei4-c@my.cityu.edu.hk; B. Jiang, State Key Laboratory of CCSE, Beihang University, Beijing, China; emails: jiangbo@buaa.edu.cn}

\maketitle

\section{Introduction}
Deep learning (DL) models achieve high performance in many application tasks \cite{2021SurveyK,litjens2017survey}. 
However, they are vulnerable to adversarial attacks \cite{gu2019badnets,brown2017adversarial,zhong2021understanding,szegedy2013intriguing}. 
To address this weakness, defense techniques against adversarial attacks are under active research \cite{ma2019nic,usman2021nn,zhang2022towards,zhang2021out}.

Patch adversarial attack \cite{brown2017adversarial} is a model of physical adversarial attacks in the real world \cite{levine2020randomized}.
It manipulates pixels within a certain region of an image (called a patch region \cite{xiang2022patchcleanser}) to deceive DL models. 
For example, attaching a stop sign with a patch can raise a safety concern due to image misclassifications in autonomous driving scenarios \cite{eykholt2018robust}. 
A novel category of defenses, called patch robustness certification \cite{patchcensor},
is emerging and attracts the attention of the SE community \cite{patchcensor,zhou2023majority}.
It has two research lines:
certified recovery \cite{zhou2023majority,xiang2021patchguard,xiang2022patchcleanser,levine2020randomized,li2022vip,salman2022certified} and certified detection \cite{xiang2021patchguard++,han2021scalecert,li2022vip,patchcensor,mccoyd2020minority}.
Certified recovery aims to assign benign labels to malicious samples, and certified detection aims to issue warnings for malicious samples predicted to non-benign labels. They provide provable guarantees for benign samples on their goals, respectively known as recovery certification and detection certification.
%
Still, as we present in Section~\ref{sec:pre}, the former ones 
cannot systematically warn samples with non-benign labels, and the latter ones cannot
recover benign labels from manipulation.
%


This paper proposes \textit{CrossCert}, a novel cross-checking \textit{detection} approach to patch robustness certification. 
It addresses the challenge that existing certified detection defenders
\cite{xiang2021patchguard++,han2021scalecert,li2022vip,patchcensor,mccoyd2020minority}
 {cannot} recover a benign label for a malicious version of a certifiably detectable sample 
 (see Def. \ref{def:cert_detect}) while 
 remaining silent (i.e., issuing no warning). We formulate this notion as the \emph{unwavering certification} (see Def. \ref{def:cert_no_warning}).
\textit{CrossCert} provides provable guarantees for this notion (see Thm. \ref{thm:consistency})
and for detection certification (see Thm. \ref{thm: intersection}). 
  It composes a pair of certified recovery defenders (referred to as $R_1$ and $R_2$), outputs the prediction label returned by $R_1$, analyzes their individual recovery semantics, and cross-checks the results of the worst-case analysis.
  It aims to certify a given sample in the following aspects through the ideas of \emph{making no mistake} and \emph{making no common mistake}:
 %
  %
  Is the benign sample certifiably recoverable 
  (see Def. \ref{def:cert_recovery}) and predicted to the same label by both $R_1$ and $R_2$?
 %
 %
 Do all malicious versions of the benign sample that are manipulated within the same patch region fail to simultaneously attack $R_1$ and $R_2$ to produce the same malicious label?
  \textit{CrossCert} then refines the answers for these two questions to enforce the two guarantees mentioned above, which infer a pair of conditions to remain silent and issue warnings for malicious versions of those certifiably unwavering and certifiably detectable samples, respectively.
 

In our experiments on three benchmark datasets, including ImageNet\cite{deng2009imagenet}, CIFAR100 \cite{krizhevsky2009learning}, and CIFAR10 \cite{krizhevsky2009learning}, \textit{CrossCert} 
finds many certifiably unwavering samples.
For instance, it achieves up to 44.68\%, 52.50\%, and 78.75\% in certifiably unwavering accuracy with Vision
Transformer for these three datasets, respectively.
With the extra constraints of offering unwavering certification and labels from recovery defenders, it
achieves certifiably detectable accuracy that is slightly lower compared to the top-performing peer technique ViP \cite{li2022vip}, and is comparable to that of PatchCensor \cite{patchcensor} in certifiably detectable accuracy.



The main contribution of this paper is threefold. First, this paper is the first work to propose the notion of
unwavering certification. Second, it proposes a novel cross-checking framework \textit{CrossCert} with its theory for unwavering certification and detection certification. Last but not least, it presents an evaluation to show the feasibility and effectiveness of \textit{CrossCert}.

We organize the rest of this paper as follows. \S\ref{sec:pre} revisits the preliminaries.
\S\ref{sec:CC} and \S\ref{sec:eva} present \textit{CrossCert} followed by its evaluation. 
\S\ref{sec:rel} reviews closely related work. 
We conclude this work in \S\ref{sec:con}.

\section{Preliminaries}\label{sec:pre}

\subsection{Classification Model and Patch Attacker}
An image sample is a matrix of pixels with $h$ rows and $w$ columns.
Given an image sample space $\mathcal{X} \subset \mathbb{R}^{w \times h}$ 
and a label space $\mathcal{Y} = \{0, 1, \cdots, |\mathcal{Y}|-1\}$, we denote a classification model by a function $g: \mathcal{X} \rightarrow \mathcal{Y}$, which takes a sample $x \in \mathcal{X}$ as input to output a label $y \in \mathcal{Y}$. 

We use $\textsc{J}$ to denote {an all-ones matrix}
($\textsc{J} := \lambda i,j: \textsc{J}_{ij} =1$) and \textsc{O} to denote {an all-zeros matrix}
($\textsc{O} := \lambda i,j: \textsc{O}_{ij} =0$). 
Let $U$ and $V$ be binary matrices with respective elements  $U_{ij}$ and $V_{ij}$.
We define the elementwise matrix operators for addition, subtraction, and multiplication, $+$, $-$, and $\odot$, as follows:
(1) 
$U + V := \lambda i,j: \textit{max}\{U_{ij},V_{ij}\}$.
(2)
$U - V :=  \lambda i,j: U_{ij} - V_{ij}$ if $U_{ij}=1$, otherwise 0. 
(3) 
$U \odot V :=  \lambda i,j: U_{ij} \times V_{ij}$.


A patch attacker can modify pixels within a specific \textbf{patch region}
of an image, denoted by a matrix $\textsc{p} \in \mathbb{P} \subset [0,1]^{w \times h}$, where $\mathbb{P}$ represents the set of all possible patch regions, and all elements within and outside the patch region are set to 1 and 0, respectively. 
We express an attacker constraint set $\mathbb{A}(\textit{x})$ = $\{\textit{x}'
\mid \textit{x}'=(\textsc{J}-\textsc{p})\odot \textit{x}+\textsc{p}\odot \textit{x}'' \land \textsc{p}\in \mathbb{P} \}$ to represent the bound of the samples generated by a patch attacker, where $\textit{x} \in \mathcal{X}$ is the original image without modification (called a\textbf{ benign} sample), $\textit{x}' \in \mathcal{X}$ is an image provided by the attacker (called a\textbf{ malicious} sample), $\textit{x}'' \in \mathcal{X}$ is an arbitrary image.
We also say that the malicious sample $\textit{x}'$ {is around the benign sample} $\textit{x}$ if $\mathbb{A}(\textit{x})$ contains $\textit{x}'$.

Given a benign sample \textit{x}, the attacker aims to find a malicious sample $\textit{x}' \in \mathbb{A}(\textit{x})$ such that $g(\textit{x})\neq g(\textit{x}')=y_a$ for some $y_a \in \mathcal{Y}$. In that case,  
$\textit{x}'$ is called a \textbf{harmful} variant (or harmful sample) of $\textit{x}$ and $y_a$ is called a malicious label.
We further refer to the pair $(\textsc{p}, y_a)$ where $\textsc{p}$ is the patch region on $\textit{x}$ to constrain the attacker to generate
$\textit{x}'$ such that
$g(\textit{x}')=y_a$
as an \textbf{%
attack configuration}.



\subsection{Certified Defense against Patch Attacks: Detection and Recovery}
\textbf{Certified defense}  \cite{xiangshort} for
 detection  \cite{mccoyd2020minority,xiang2021patchguard++,patchcensor,han2021scalecert,li2022vip} 
and  recovery \cite{chiang2020certified,levine2020randomized,salman2022certified,li2022vip,chen2022towards,lin2021certified,xiang2021patchguard,metzen2021efficient,zhang2020clipped,xiang2022patchcleanser,zhou2023majority} 
aims to offer
\textbf{provable} guarantees on robustness against adversarial patch attacks.

\subsubsection{Certified Detection}

A certified detection defense aims to design a detection defender, denoted as
$D = \langle g, v, c_{\textsc{d}}\rangle$, which checks samples with the following behavior.

\begin{itemize}
\item 
The function $g(\textit{x})$ outputs a prediction label $g_\textit{x}$ for the input sample $\textit{x}$.
\item
The function $v(\textit{x})$ is called a warning verification function.
It returns \textit{True} if it detects the input sample \textit{x} harmful, otherwise \textit{False}.
So, a warning on \textit{x} is raised if $v(\textit{x})$ = \textit{True}.
\item 
The function $c_{\textsc{d}}(x)$ is called a detection certification function. 
It verifies whether $D$ \textit{guarantees} issuing warning for all 
harmful variants around the input sample \textit{x}.
If it can, \textit{x} is called \textbf{certifiably detectable} and 
 $c_{\textsc{d}}(x)$ returns \textit{True}; otherwise,  $c_{\textsc{d}}(x)$ returns \textit{False}.
\end{itemize}

The defender $D$ executes $c_{\textsc{d}}(.)$ to certify benign samples.
If the labels of a certifiably detectable sample and a malicious sample around it differ,
$v(.)$ should raise a warning when executing with the malicious sample as input.
Lastly, the function $g(.)$ returns prediction labels in all situations. 
 Def. \ref{def:cert_detect} captures the concept of \textbf{certified detection} ---
Suppose a certified detection defender certifiably detects a benign sample \textit{x}.
It must raise warnings for all harmful variants of \textit{x}.


\begin{Definition}[Certified Detection]\label{def:cert_detect}
\label{def:certification-definition}
A certified detection defender $D = \langle g, v, c_{\textsc{d}}\rangle$ certifies a benign sample $\textit{x}$ as certifiably detectable (i.e., $c_{\textsc{d}}(\textit{x})$ = \textit{True}) if
 $[\forall \textit{x}' \in \mathbb{A}(\textit{x})$, $g(\textit{x}')\neq g(\textit{x}) \implies$ $v(\textit{x}')=\textit{True}]$ holds.
\end{Definition}

%
If the defender issues a warning for a sample, it may further invoke a fallback strategy, such as discarding the sample or passing the sample and the warning outcome to the fallback tasks \cite{patchcensor}. 

\subsubsection{Certified Recovery} 
A certified recovery defense aims to design a recovery defender, denoted as
$R = \langle g, v, c_{\textsc{r}}\rangle$, which assigns a recovered label to an input sample
with the following behavior.

\begin{itemize}
\item
The function $g(\textit{x})$ outputs a prediction label $g_\textit{x}$ (called a \textbf{recovered label}) for the sample $\textit{x}$.

\item 
The function $v(\textit{x})$ is a warning verification function. 
It is not used by the existing recovery defenders. When absent, the function returns \textit{False} by default. We include this function in the formulation of a certified recovery defender to cover the scope of the present paper.

\item
The function $c_{\textsc{r}}(\textit{x})$ is a recovery certification function. 
It verifies whether 
$R$ \textit{guarantees} predicting the benign labels for all malicious versions of the benign sample $\textit{x}$.
If this is the case, $x$ is called \textbf{certifiably recoverable}, and $c_{\textsc{r}}(x)$ returns \textit{True}; otherwise, it 
returns \textit{False}.
\end{itemize}


%
Def. \ref{def:cert_recovery} states the concept of \textbf{certified recovery} --- 
If a certified recovery defender $R$ ensures all malicious versions of a benign sample \textit{x} predicted to the same label, 
it certifies \textit{x} as certifiably recoverable.
For certification, 
$R$ executes all three.
For prediction, it only executes $g(.)$ and $v(.)$.

\begin{Definition}[certified Recovery]\label{def:cert_recovery}
A certified recovery defender $R$ with 
$g$ and $c_{\textsc{r}}$ as its prediction and recovery certification functions 
certifies a benign sample $\textit{x}$ as certifiably recoverable (i.e., $c_{\textsc{r}}(\textit{x})$ = \textit{True}) if the condition $[\forall \textit{x}' \in \mathbb{A}(\textit{x})$, $g(\textit{x}')=g(\textit{x})]$ holds. 
\end{Definition}

\subsection{Voting-Based Recovery, and Masking-Based Detection and Recovery}\label{sec: V&M_detection_recovery}
Existing certified defenders  against adversarial patch attacks include 
voting-based recovery \cite{levine2020randomized,salman2022certified,li2022vip,chen2022towards,lin2021certified,xiang2021patchguard,metzen2021efficient,zhang2020clipped,zhou2023majority} as well as masking-based detection \cite{patchcensor,li2022vip,xiang2021patchguard++,han2021scalecert,mccoyd2020minority} and recovery \cite{xiang2022patchcleanser}. 

\subsubsection{Voting-Based Recovery}
\label{sec:voting-recovery}
The main idea of voting-based recovery defenders is to limit the capability of an attack patch on a sample to produce a malicious label.
There are two strategies in the literature.
Strategy 1 divides the sample into smaller pieces called ablations \cite{levine2020randomized,xiang2021patchguard, salman2022certified,li2022vip,zhou2023majority}, and Strategy 2 limits the size and impact of the feature extraction windows to be small and slides these windows through the sample \cite{zhang2020clipped,metzen2021efficient,xiang2021patchguard}. 
Both strategies generate a set of mutants for the sample.
The defender then determines the majority label among these mutants. 
By analyzing the worst-case scenario against all patch regions, 
if the majority label remains unchanged, the defender certifies the sample as certifiably recoverable.

In the rest of this section, we revisit the concept of voting-based recovery using Strategy 1 due to its simpler presentation. Note that defenders using Strategy 2 are similar to those using Strategy 1.

An \textbf{ablated region} is represented by a matrix $\textsc{b}\in\mathbb{B}\subset [0,1]^{w \times h}$, where $\mathbb{B}$ is a set that contains all ablated regions.
In $\textsc{b}$, the elements within and outside the region are set to 1 and 0, respectively. 
An \textbf{ablated mutant} $\textit{x}_\textsc{b}$ of a sample \textit{x} is generated by applying the ablated region \textsc{b} to the sample \textit{x} (i.e., $\textit{x}_\textsc{b}$=$\textit{x}\odot\textsc{b}$).
In the resulting mutant $\textit{x}_\textsc{b}$, only the overlapping part of \textit{x} and $\textsc{b}$ remains visible (see Fig. \ref{fig:exmaple}). 
We denote the set of all mutants of \textit{x} as $\mathbb{X}_\mathbb{B}(\textit{x})$. 

Suppose $f$ represents the base model in a voting-based recovery defender. 
The prediction label assigned by $f$ to a mutant 
is referred to as a \textit{vote}.
In the worst case, an attacker can modify any pixels within $\textsc{b}\odot\textsc{p}$ in some $\textit{x}_\textsc{b}$ with \textsc{b} overlapping with \textsc{p} (i.e., $\textsc{b}\odot\textsc{p}\neq\textsc{O}$) to cause $f$ to output a specific label.

A vote-based certified recovery defender $R_v = \langle g, v, c_\textsc{r} \rangle$ returns the majority of these votes as its recovered label (i.e., ${g(\textit{x})}=$
\emph{arg}$max_{y}
|\{\textit{x}_\textsc{b} \in \mathbb{X}_\mathbb{B}(\textit{x}) \mid f(\textit{x}_\textsc{b}) = y\}|=$ {$g_x$}). 
We use the symbol $\Delta$ to denote the maximum number of ablated mutants that any patch can overlap with. Note that this number is calculated based on the geometric relationship (i.e., sizes and shapes)  between the ablated region, the patch region, and the image \cite{levine2020randomized,salman2022certified}.
Previous works \cite{levine2020randomized,salman2022certified,li2022vip} have proven that for a benign sample $\textit{x}$, if the number of votes received by the majority label is greater than the number of votes received by every non-majority label by at least $2\Delta$ (i.e., 
$c_\textsc{r}(\textit{x})$ is defined as 
{$|\{\textit{x}_\textsc{b} \in \mathbb{X}_\mathbb{B}(\textit{x}) \mid f(\textit{x}_\textsc{b})=g_x\}|-max_{y\neq g_x}|\{\textit{x}_\textsc{b} \in \mathbb{X}_\mathbb{B}(\textit{x}) \mid f(\textit{x}_\textsc{b})=y\}| > 2\Delta$})
, then the condition $[\forall \textit{x}'\in\mathbb{A}(\textit{x})$,  ${g(\textit{x}')}=$\emph{arg}$max_{y}
|\{\textit{x}_\textsc{b} \in \mathbb{X}_\mathbb{B}(\textit{x}) \mid f(\textit{x}_\textsc{b}) = y\}|=g(\textit{x})]$ holds, which means that the sample $\textit{x}$ is certifiably recoverable. 
In Fig. \ref{fig:exmaple}, the diagram on the left illustrates an example. 
\begin{figure}[htb]
\centering
\includegraphics[width = .48\linewidth]{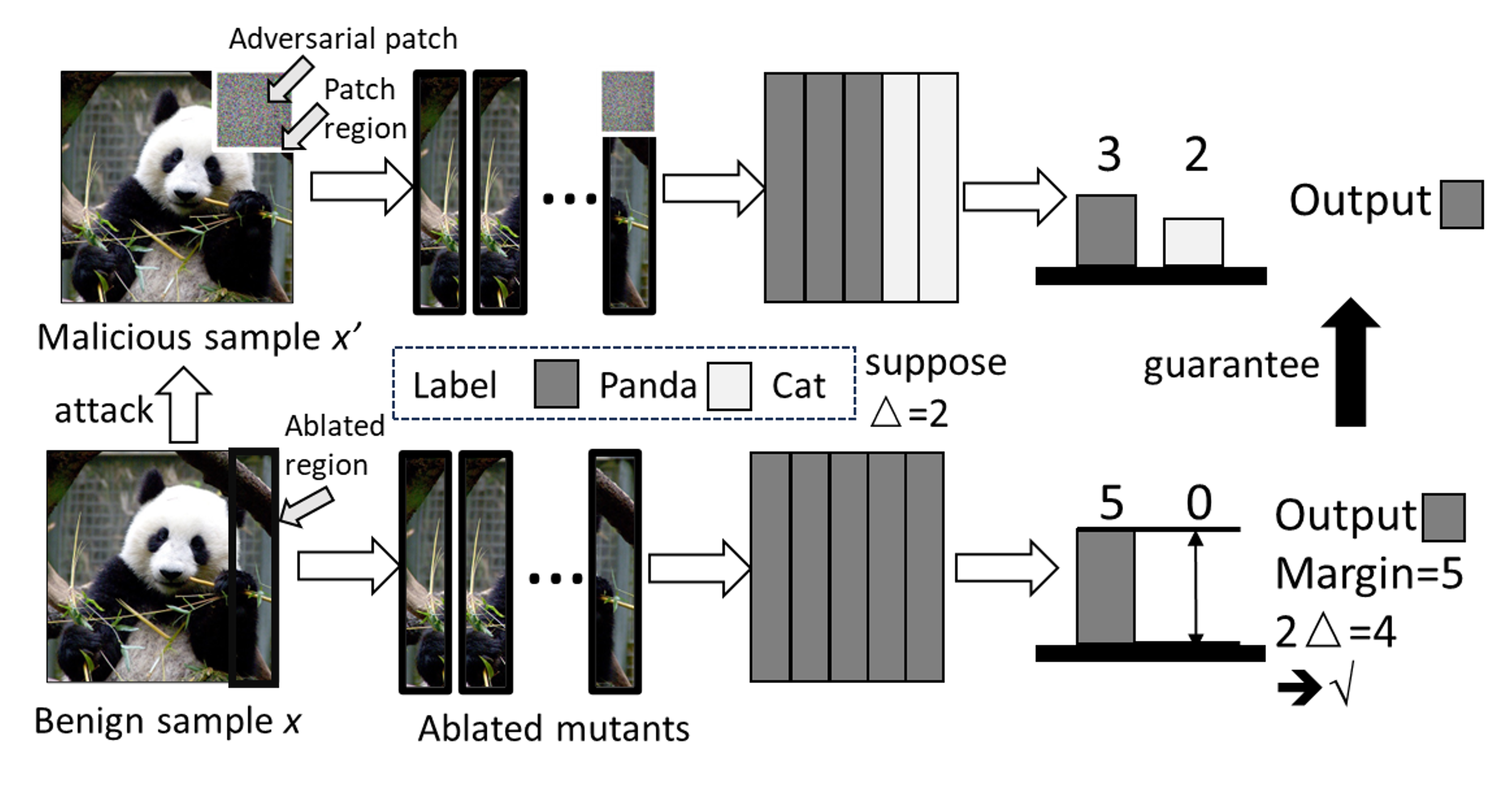}\hfill
\includegraphics[width = .48\linewidth]{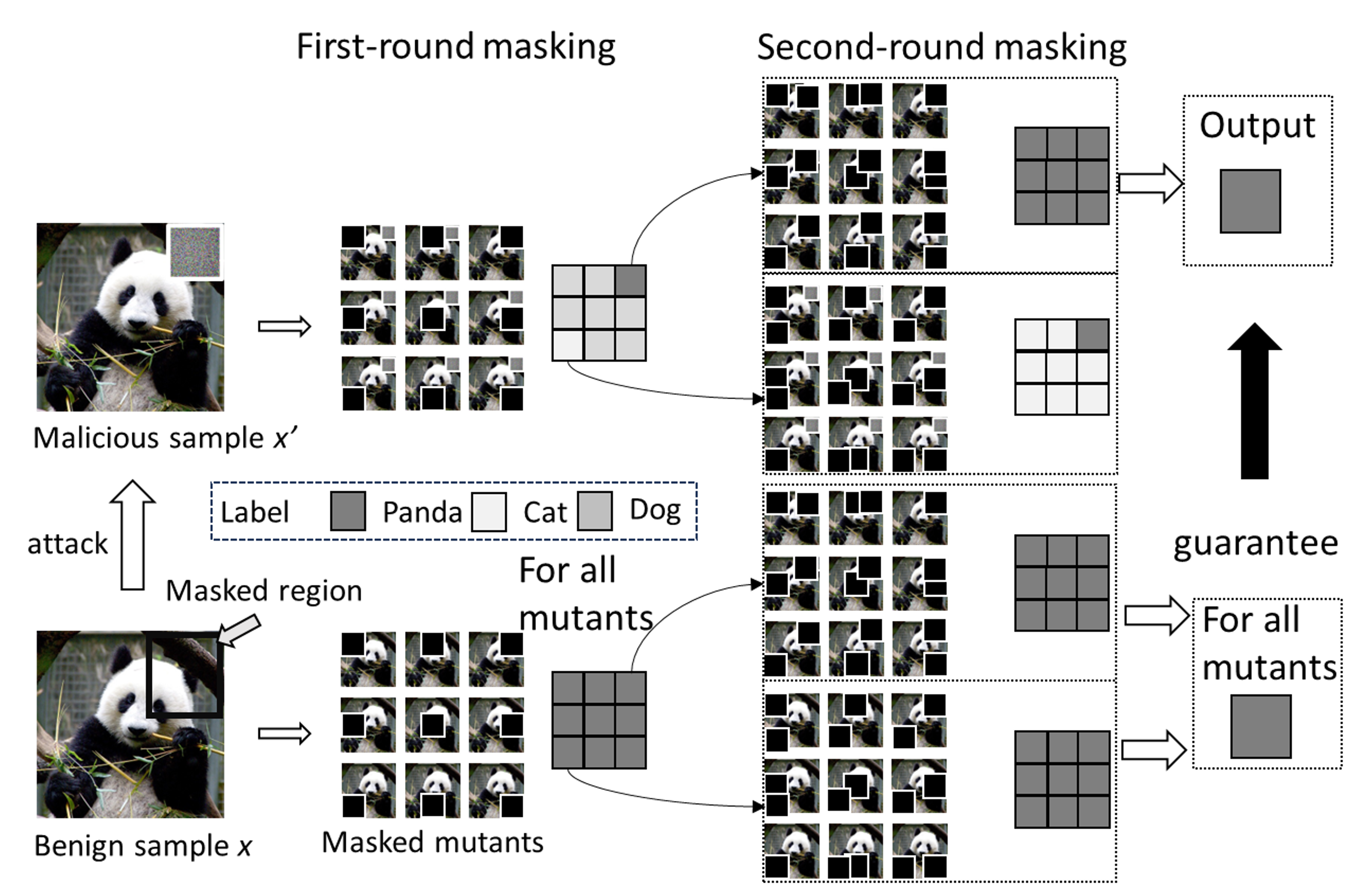}
\caption{Examples for a voting-based recovery defender (left) and a masking-based recovery defender (right). 
\textit{Left:} Given a benign sample \textit{x}, the voting-based recovery defender performs a round of ablating on \textit{x} to generate ablated mutants. 
Suppose $\Delta=2$. 
As illustrated in the lower section, for a benign sample, $f$ predicts Panda for 5 mutants and Cat for none. Thus, we have $5-0 > 2\Delta$, and the sample is predicted as Panda. Then, if there is any patch attached to this benign sample (as illustrated in the upper section, for example), at most two mutants would be affected to change the prediction from Panda to Cat in the worst case, resulting in 3 votes for Panda and 2 for Cat. Still, we have $3 > 2$, and the malicious sample is predicted as Panda. 
\textit{Right:} Given a benign sample $\textit{x}$, 
a masking-based recovery defender applies two rounds of masking on $\textit{x}$ to generate mutants. Suppose the prediction labels of all these mutants reach a consensus (all are Panda, as illustrated in the lower section). Then, if there is any patch (which must be masked by at least one mask) overlapping with $\textit{x}$, Alg.~\ref{alg:patchcleanser_original} guarantees to output Panda for any resulting malicious sample.
The upper section illustrates Case II in Alg.~\ref{alg:patchcleanser_original}. In the first-round masking, mutants may have different prediction labels like Panda, Cat, and Dog. 
Alg.~\ref{alg:patchcleanser_original} continues to perform second-round masking on first-round masked mutants with non-majority labels. 
In Alg.~\ref{alg:patchcleanser_original}, if all mutants of any first-round masked mutant reach a consensus on their predicted labels, then the label is output, which is the case for the first-round masked mutant in the top-right corner, where $f$ predicts Panda for all its second-round masked mutants. Thus, Panda is output.}
\label{fig:exmaple}
\end{figure}

\subsubsection{Masking-Based Detection and Recovery} 
\label{sec:masking-recovery}
The main idea of masking-based detection is as follows: 
A mask hides a specific region of a sample to generate a one-masked 
mutant (see Fig. \ref{fig:exmaple}).
After generating a set of masks that ensures each patch region is covered by at least one mask, masking-based certified detection defenders certify the sample as certifiably detectable if it verifies that the base model assigns the benign label of the benign sample to all its one-masked mutants.


A \textbf{mask region} (mask for short) is represented by a matrix $\textsc{m}\in\mathbb{M}\subset [0,1]^{w \times h}$, where $\mathbb{M}$ is a set of mask regions that can cover every patch region $\textsc{p}\in\mathbb{P}$ by at least one mask (i.e., $\forall \textsc{p}\in\mathbb{P}, \exists \textsc{m}\in\mathbb{M}, \textsc{p}\odot\textsc{m}=\textsc{p}$), and all elements within and outside each mask region are set to 1 and 0, respectively. 
A \textbf{(masked) mutant} $\textit{x}_\textsc{m}$ of a sample \textit{x} is generated by removing the elements in \textit{x} that overlap with any element equal to 1 in the mask region \textsc{m}
(i.e., $\textit{x}_\textsc{m}=(\textsc{J}-\textsc{m})\odot\textit{x}$ where $\textsc{J}$ is a matrix with all elements equal to 1).
The set of mutants generated from \textit{x} by a mask set $\mathbb{M}$  is denoted as $\mathbb{X}_\mathbb{M}(\textit{x})$.
Note that a mutant is also a sample.
We also simply write $(\textsc{J}-\textsc{m}-\textsc{m}')\odot\textit{x}$ as the shorthand notation of $(\textsc{J}-\textsc{m}) \odot (\textsc{J}-\textsc{m}')\odot\textit{x}$%
, which removes the elements in \textit{x} that overlap with any element equal to 1 in the mask region $\textsc{m}$ and $\textsc{m}'$.

%
%
%


Suppose $h$ is the base model used in a masking-based certified recovery defender. 
We also refer to the prediction label assigned by $h$ to a masked mutant as a vote.

Masking-based certified detection defenders 
use the prediction label assigned by $h$ to the input sample as the prediction label (i.e., ${g(\textit{x})}=h(\textit{x})$) \cite{patchcensor, li2022vip}. 
They then check against a consistency condition: 
whether all the votes
are the same as the prediction label assigned by $h$ to the input sample (i.e., $\forall \textit{x}_\textsc{m}\in \mathbb{X}_\mathbb{M}(\textit{x}), h(\textit{x}_\textsc{m})=h(\textit{x})$).
Existing works \cite{patchcensor, li2022vip} have proven that if a benign sample \textit{x} satisfies the consistency condition, they 
guarantee to detect any harmful variant of $\textit{x}$ by checking whether the variant violates the consistency condition.
Therefore, the function $c_\textsc{d}(.)$ certifies a benign sample as certifiably detectable if the sample satisfies the consistency condition
{and the function $v(.)$ raises a warning if the input sample violates the consistency condition}. 


\begin{figure*}[t]
\begin{minipage}[t]{0.463\linewidth}
\begin{algorithm}[H]
\scriptsize
\caption{{\small PatchCleanser's Prediction}}
\label{alg:patchcleanser_original}
\SetKwInOut{KwIn}{Input}
\SetKwInOut{KwOut}{Output}
\KwIn{$\textit{x}$ $\gets$ sample, \\
      $h$ $\gets$ base classification model, \\
      $\mathbb{M}$ $\gets$ mask set \\
      }
\KwOut{
$g_\textit{x}$ $\gets$ output label\\
}

\text{$y_{\textit{maj}}=\text{arg\,max}_{y}
|\{\textsc{m} \in \mathbb{M} \mid h(\textit{x}_\textsc{m}) = y \land \textit{x}_\textsc{m}=(\textsc{J}-\textsc{m})\odot\textit{x} \}|$}
\\
${\mathbb{M}_{\textit{min}}}=\{\textsc{m} \in \mathbb{M} \mid h(\textit{x}_\textsc{m}) \neq y_{\textit{maj}}\land \textit{x}_\textsc{m}=(\textsc{J}-\textsc{m})\odot\textit{x}\}$\\
\If{$\mathbb{M}_{\textit{min}}=\emptyset$}
{
\Return{$y_{\textit{maj}}$} \tcp*{\tiny Case I: agreed prediction}
}
\ForEach{$\textsc{m}$ $\in$ $\mathbb{M}_{\textit{min}}$\\} 
{
\text{\tiny
$y_{\textit{maj}}'=\text{arg\,max}_{y}
|\{\textsc{m}' \in \mathbb{M}\mid h((\textsc{J}-\textsc{m}-\textsc{m}')\odot\textit{x})=y \}|$}\\
${\mathbb{M}_{\textit{min}}}'=\{\textsc{m}' \in\mathbb{M} \mid h((\textsc{J}-\textsc{m}-\textsc{m}')\odot\textit{x})\neq y_{\textit{maj}}'\}$\\
\If{${\mathbb{M}_{\textit{min}}}'=\emptyset$}
{
\text{\Return{$y_{\textit{maj}}'$ \tcp*{\text{\tiny Case II: disagreed prediction}}}}
}
}
\Return{$y_{\textit{maj}}$} 
\tcp{\tiny Case III: majority prediction}

\end{algorithm}
\end{minipage}
\begin{minipage}[t]{0.525\linewidth}
\begin{algorithm}[H]
 \scriptsize
\caption{{\small  \text{PatchCleanser's Revised Prediction}}}
\label{alg:patchcleanser_revision}
\SetKwInOut{KwIn}{Input}
\SetKwInOut{KwOut}{Output}
\KwIn{$\textit{x}$ $\gets$ sample, \\
      $h$ $\gets$ base classification model, \\
      $\mathbb{M}$ $\gets$ mask set \\
      }
\KwOut{
$g_\textit{x}$ $\gets$ output label, 
$v_\textit{x}$ $\gets$ warning label\\
}

$y_{\textit{maj}}=\text{arg\,max}_{y}
|\{\textsc{m}\in \mathbb{M}\mid h(\textit{x}_\textsc{m}) = y \land \textit{x}_\textsc{m}=(\textsc{J}-\textsc{m})\odot\textit{x}\}|$\\
${\mathbb{M}_{\textit{min}}}=\{\textsc{m} \in \mathbb{M} \mid h(\textit{x}_\textsc{m}) \neq y_{\textit{maj}}\land \textit{x}_\textsc{m}=(\textsc{J}-\textsc{m})\odot\textit{x}\}$\\
\If{$\mathbb{M}_{\textit{min}}=\emptyset$}
{
\Return{$y_{\textit{maj}}$, \textit{False}} \tcp*{\tiny Case I: agreed prediction}
}

\ForEach{ {$\textsc{m}$ $\in$ $\mathbb{M}$}
\tcp{ \textbf{change from} $\textsc{m}\in\mathbb{M}_{\textit{min}}$}
} 
{
$y_{\textit{maj}}'=\text{arg\,max}_{y}
|\{\textsc{m}' \in \mathbb{M} \mid h((\textsc{J}-\textsc{m}-\textsc{m}')\odot\textit{x}) = y \}|$\\
${\mathbb{M}_{\textit{min}}}'=\{\textsc{m}' \in \mathbb{M} \mid h((\textsc{J}-\textsc{m}-\textsc{m}')\odot\textit{x})\neq y_{\textit{maj}}'\}$\\
\If{${\mathbb{M}_{\textit{min}}}'=\emptyset$}
{
\text{\Return{$y_{\textit{maj}}'$, \textit{False}\ \tcp*{\text{\tiny Case II: disagreed prediction}}}}
}
}

\Return{$y_{\textit{maj}}$, \textit{True}}  \tcp*{\text{\tiny Case III: majority prediction + \textbf{warning}}}
\end{algorithm}
\end{minipage}
\end{figure*}

However, such a detection defender {cannot determine} which label is benign. PatchCleanser \cite{xiang2022patchcleanser} is the first (and, to our knowledge, the most representative) masking-based recovery defender.
It formulates a double masking strategy to address this issue and provides a recovered label prediction algorithm, as shown in Alg. \ref{alg:patchcleanser_original}, to serve as the function $g(.)$.
Alg. \ref{alg:patchcleanser_original} consists of three cases:
\begin{itemize}
\item[Case I]
 Lines 1--5 show the procedure of masking-based certified detection. 
 If all mutants vote for the same label, PatchCleanser will output that label as the recovered label.
 \item[Case II]
In lines 6--9, PatchCleanser iterates through all mutants whose vote is different from the majority label.
It iteratively performs a second round of masking on these mutants (lines 8--9). 
In an iteration, if this masking operation on the mutant for the iteration does not result in any label disagreement among the resulting mutants of that mutant (line 10), the algorithm outputs the majority (sole) label among these resulting mutants (line 11). 
\item[Case III]
Otherwise,
PatchCleanser outputs the majority label of the mutants produced by the first round of masking as the recovered label (line 14).
\end{itemize}

 Xiang et al. \cite{xiang2022patchcleanser}
 have proven that if all mutants of a benign sample \textit{x} produced in both the first and second rounds of masking yield the same label (i.e., $\exists y \in \mathcal{Y}, \forall \textsc{m}_0, \textsc{m}_1 \in \mathbb{M}, h((\textsc{J}-\textsc{m}_0-\textsc{m}_1)\odot \textit{x})=y$, which we refer to this condition as the two masking agreement (\textbf{TMA}) condition for \textit{x}), 
  Alg. \ref{alg:patchcleanser_original} guarantees to output the benign label (the label $y$ in the TMA condition) for every malicious sample of \textit{x}.
As such,
a masking-based certified recovery defender $R_m = \langle g, v, c_\textsc{r} \rangle $ modeled after PatchCleanser is defined as: $g$ is Alg. \ref{alg:patchcleanser_original}$, $ $c_\textsc{r}(\textit{x})$ is the TMA condition for \textit{x}, and
$v(\textit{x})$ always returns \textit{False}. The diagram on the right in Fig. \ref{fig:exmaple} illustrates an example.

\subsubsection{Critical Limitation}
\label{sec:limitation-of-cd+cr}
Nonetheless, combining the existing designs of certified detection and certified recovery defenders presents challenges. 
In the following discussion, we discuss the masking-based recovery and detection defenders
because a direct comparison is enabled between them.

We first define the following concept. 
The mutant set $\mathbb{X}_\mathbb{M}(\textit{x}'')$ of a sample or a mutant, denoted by $\textit{x}''$, is called \emph{self-consistent} iff the base model $h$ of the defender assigns the same prediction label to all the mutants in the mutant set.
We recall that 
in a masking-based detection defender $D_m = \langle g, v, c_\textsc{d} \rangle$, 
$v(.)$ return \textit{False} if the mutation set of the input sample is self-consistent and all mutants vote to $g_x$; otherwise, return \textit{True} to raise a warning.
We also recall that 
in a masking-based recovery (PatchCleanser), Cases II and III in Alg.\ref{alg:patchcleanser_original} require the mutation set of the input sample is not self-consistent (or it would visit Case I instead). Note that Cases II and III are the main use cases to recover benign labels against patch attacks since the condition to output in Case I is fragile against attacks.

Here is the conflict: the detection defender \emph{always} raises a warning in the main use cases of the recovery defender, even if the recovered label is benign. 
In other words, there is \emph{no} safety indicator to protect the recovered label, such as to warn users that the sample could be malicious and the label is not safe to use, since ``warning'' will always be raised. 
(Indeed, we agree that this is a hard problem.)

Simply modifying the existing detection defender may make the situation even worse. For instance, adding any non-trivial condition to the warning verification function alone will \emph{destroy} the property of the certified detection guarantee of the detection defender. 

We are well aware that since the inception of either line of research, the certified accuracy (for detection and recovery) has gradually increased \cite{li2022vip, salman2022certified,saha2023revisiting} over the years. 
However, we are unaware of existing techniques combining these two classes of techniques with compatible semantics, 
such as producing a recovered label from certified recovery for certified detection.

\section{\textit{CrossCert}}\label{sec:CC}
\subsection{Overview}
\textit{CrossCert} belongs to a novel class of certified detection defenders.
It proposes a novel cross-examination framework, in which
it composes and cross-checks the label recovery semantics of its base certified recovery defenders.
Like previous detection defenders \cite{mccoyd2020minority,xiang2021patchguard++,han2021scalecert,li2022vip,patchcensor}, it gives a detection certification.
On top of this first level of certification, 
\textit{CrossCert} provides the next level of certification, called \emph{unwavering certification}:
It verifies whether it can guarantee that no warnings will be issued to any malicious versions of a given certifiably recoverable sample (with a benign recovered label), 
making \textit{CrossCert} a more comprehensive and practical certified defender against adversarial patch attacks.
(For example, a certified photo in a passport can be matched with the live image of that person at a custom self-service counter without issuing a warning, even if his/her face is partially injured or obscured.)
It outputs the recovered labels generated by its primary base certified recovery defender as its prediction labels. Therefore, its prediction label generation principle is intrinsically compatible with certified recovery.



\subsection{Unwavering Certification}
This section presents a new certification target --- \textbf{unwavering certification} --- as defined in Def. \ref{def:cert_no_warning}.

\begin{Definition}[Unwavering Certification]\label{def:cert_no_warning}
A certified detection defender $D = \langle g,v,c\rangle$ reports a benign sample $\textit{x}$ to be certifiably unwavering if $[\forall \textit{x}' \in \mathbb{A}(\textit{x})$, $g(\textit{x}')=g(\textit{x})\land v(\textit{x}')=\textit{False}]$ holds.
\end{Definition}

Def. \ref{def:cert_no_warning} states that 
a benign sample is called  \textbf{certifiably unwavering} for a certified detection defender if the defender can always assign the benign sample and all its malicious versions to the same label without issuing any warnings for these malicious samples.


%
If an existing certified recovery defender \cite{xiang2022patchcleanser,levine2020randomized,salman2022certified,metzen2021efficient,zhou2023majority,xiang2021patchguard,li2022vip} receives a malicious version of a benign sample that is not certifiably recoverable as input, it cannot distinguish this malicious version from those malicious samples that it produces benign recovered labels, making the defender challenging to alert their users.
On the other hand, existing certified detection defenders \cite{patchcensor,li2022vip,xiang2021patchguard++,mccoyd2020minority,han2021scalecert} are insensitive to the recovered labels generated by certified recovery defenders, leading to a lack of systematic warning protection for these recovered labels.

The notion of \textit{unwavering certification} is designed to address this lack of systematic warning protection problem. Since the main designs are a warning verification function and a new certification function, a defender for unwavering certification falls naturally into the class of certified detection. 
Property \ref{property-cu-imply-cr}
shows that unwavering certification is a novel subclass of detection certification.

\begin{Property}[] 
\label{property-cu-imply-cr}
If a sample is certifiably unwavering, it must be certifiably detectable.
\end{Property}
\begin{proof}
Recall that a certifiably detectable sample \textit{x} must satisfy the condition $[\forall \textit{x}' \in \mathbb{A}(\textit{x})$, $g(\textit{x}')\neq g(\textit{x}) \implies$ $v(\textit{x}')=\textit{True}]$. If $x$ is a certifiably unwavering sample, the antecedent $g(\textit{x}')\neq g(\textit{x})$ will never hold because Def. \ref{def:cert_no_warning} requires that the condition $g(\textit{x}') = g(\textit{x})$ holds for all malicious versions of \textit{x}. It implies that the condition $[\forall \textit{x}' \in \mathbb{A}(\textit{x})$, $g(\textit{x}')\neq g(\textit{x}) \implies$ $v(\textit{x}')=\textit{True}]$ holds.
\end{proof}

Since each certifiably unwavering sample is a certified detectable sample, increasing the proportion of certified detectable samples as certifiably unwavering decreases the probability of invoking a fall-back task (that takes over the situations when the input samples are warned),
resulting in 
a higher degree of automation (without the interruption of the fall-back task) with a safety guarantee.

\subsection{Design of the Base Framework: \textit{CrossCert-base}}\label{sec:cc-base}
This subsection presents the base framework of \textit{CrossCert}, which we refer to as \textit{CrossCert-base}.
The basic idea of \textit{CrossCert-base} is to cross-check the outputs from two certified recovery defenders and raise warnings when their outputs are inconsistent. The challenge addressed by \textit{CrossCert-base} is that it has to simultaneously ensure the unwavering certification and detection certification, which requires an in-depth analysis, although its solution is neat and tidy.

We first define $\mathbf{1}[z]$ as the indicator function to return \textit{True} if the condition $z$ holds, or else \textit{False}.

 
\textit{CrossCert-base} $D = \langle g, v, c\rangle $ 
 integrates two certified recovery defenders, denoted by $R_1=\langle g_1, v_1, c_1\rangle$ and $R_2=\langle g_2, v_2, c_2\rangle$. 
 Its components are formulated as follows:
 \begin{itemize} 
 \item 
 $g(\textit{x}):=g_1(\textit{x})$. 
 \item 
 $v(\textit{x}):=\mathbf{1}[g_1(\textit{x})\neq g_2(\textit{x})]$. 
 \item
 $c(\textit{x}):=\langle c_\textsc{u}(\textit{x}), c_\textsc{d}(\textit{x})\rangle $. 
 \item 
 $c_\textsc{u}(\textit{x}):=\mathbf{1}[g_1(\textit{x})=g_2(\textit{x})\land c_1(\textit{x})=c_2(\textit{x})=\textit{True}]$. 
 It
gives an unwavering certificate.
 \item
 $c_\textsc{d}(\textit{x}):=\mathbf{1}[[c_1(x)=\textit{True}]\lor[c_2(\textit{x})=\textit{True}\land g_1(\textit{x})= g_2(\textit{x})]]$.
 It gives a detection certificate.
 \end{itemize}


The design of \textit{CrossCert-base} is to directly output the recovered labels returned by the underlying (primary) base certified recovery defender $R_1$ in making predictions for all samples.
So,  \textit{CrossCert-base} is also a certified recovery defender by using $c_1$ of $R_1$ as its recovery certification function. As such, 
it shares the same certified recovery accuracy with $R_1$.
Furthermore,
a special case of \textit{CrossCert-base} is  $R_1 = R_2$. In this special case, 
 \textit{CrossCert-base} is degenerated into $R_1$ (a pure certified recovery defender) and cannot give any warning since the output labels of $R_1$ and $R_2$ are the same. 

We next present two theorems that support the two certifications ($c_\textsc{u}$ and $c_\textsc{d}$) stated above. 

Thm. \ref{thm:theory_1} is a direct consequence of the definition of certified recovery. The underlying intuition is that if both recovery defenders can recover the benign label of every malicious version $\textit{x}'$ of a benign sample \textit{x}, they must return the same label for every malicious version of \textit{x}.
Compared to Def. \ref{def:cert_no_warning}, if a defender can further ensure that it does not raise any warning on every such malicious version, the benign sample is a certifiably unwavering sample.
As stated above, the function $v(.)$ in \textit{CrossCert-base} is designed to raise a warning on an input sample only if the two base recovery defenders return different labels.
Thus, a benign sample \textit{x} that satisfies the antecedent of Thm. \ref{thm:theory_1}, which is the condition used to define the function $c_\textsc{u}(.)$ in \textit{CrossCert-base}, is certified by \textit{CrossCert-base} as certifiably unwavering (i.e., $c_\textsc{u}(\textit{x})$ = \textit{True}).
%
%
%
The proof for  Thm. \ref{thm:theory_1} 
only uses the outputs of recovery defenders for the logic deduction. It readily applies to all types of recovery defenders. In our experiments, to show the novelty of our framework,
we use a masking-based certified recovery defender as $R_1$ and a voting-based certified recovery defender as $R_2$.

\begin{thm}[A Condition for Unwavering Certification]
\label{thm:theory_1}
Given a sample \textit{x} and two recovery defenders $R_1=\langle g_1, v_1, c_1\rangle$ and $R_2=\langle g_2, v_2,  c_2\rangle$. If $g_1(\textit{x})=g_2(\textit{x})\land c_1(\textit{x})=c_2(\textit{x})=\textit{True}$, then the condition 
$[\forall \textit{x}'\in \mathbb{A}(\textit{x}), g_1(\textit{x}')=g_1(\textit{x})\land g_1(\textit{x}')=g_2(\textit{x}')]$ holds.
\end{thm}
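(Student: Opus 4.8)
The plan is to unfold the definition of certified recovery (Def.~\ref{def:cert_recovery}) separately for $R_1$ and $R_2$ and then chain the resulting equalities through the hypothesis $g_1(\textit{x}) = g_2(\textit{x})$. First I would observe that the hypothesis gives $c_1(\textit{x}) = \textit{True}$; by Def.~\ref{def:cert_recovery} this means $[\forall \textit{x}' \in \mathbb{A}(\textit{x}),\ g_1(\textit{x}') = g_1(\textit{x})]$, i.e.\ $R_1$ reproduces its label on \textit{x} for every sample around \textit{x}. Symmetrically, $c_2(\textit{x}) = \textit{True}$ yields $[\forall \textit{x}' \in \mathbb{A}(\textit{x}),\ g_2(\textit{x}') = g_2(\textit{x})]$. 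The one point that deserves a sentence of care is that both statements are quantified over the \emph{same} set $\mathbb{A}(\textit{x})$: this set is defined purely from \textit{x} and the patch-region family $\mathbb{P}$ and does not depend on which defender is being certified, so the two universal statements can be combined without any compatibility caveat.

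Next I would fix an arbitrary $\textit{x}' \in \mathbb{A}(\textit{x})$ and assemble the chain $g_1(\textit{x}') = g_1(\textit{x}) = g_2(\textit{x}) = g_2(\textit{x}')$, applying the first consequence, then the theorem's hypothesis $g_1(\textit{x}) = g_2(\textit{x})$, then the second consequence, in that order. From this chain both conjuncts of the goal drop out immediately: $g_1(\textit{x}') = g_1(\textit{x})$ is the outer equality of the first two terms, and $g_1(\textit{x}') = g_2(\textit{x}')$ is the equality of the first and last terms (transitivity of equality). Since $\textit{x}'$ was an arbitrary element of $\mathbb{A}(\textit{x})$, the universally quantified conclusion $[\forall \textit{x}' \in \mathbb{A}(\textit{x}),\ g_1(\textit{x}') = g_1(\textit{x}) \land g_1(\textit{x}') = g_2(\textit{x}')]$ follows.

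There is essentially no hard part: the statement is a direct bookkeeping consequence of Def.~\ref{def:cert_recovery} plus transitivity of equality, which is exactly why the surrounding text calls it ``a direct consequence of the definition of certified recovery.'' The substantive content lies not in the proof but in the interpretation, which I would place \emph{after} the proof rather than inside it: since $v(\textit{x}') = \mathbf{1}[g_1(\textit{x}') \neq g_2(\textit{x}')]$ in \textit{CrossCert-base}, the second conjunct forces $v(\textit{x}') = \textit{False}$ on every $\textit{x}' \in \mathbb{A}(\textit{x})$, so the hypothesis of the theorem is precisely the predicate defining $c_\textsc{u}(\textit{x})$ and, by Def.~\ref{def:cert_no_warning}, certifies \textit{x} as certifiably unwavering. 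Keeping that remark separate preserves the generality noted in the excerpt, namely that the argument uses only the output behaviour of the two recovery defenders and hence applies regardless of whether they are voting-based or masking-based.
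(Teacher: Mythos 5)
Your proposal is correct and follows exactly the paper's own argument: unfold Def.~\ref{def:cert_recovery} for each of $c_1(\textit{x})$ and $c_2(\textit{x})$ to get the two universally quantified label-stability statements, then chain them with the hypothesis $g_1(\textit{x})=g_2(\textit{x})$ by transitivity. The extra observations you add (that $\mathbb{A}(\textit{x})$ is defender-independent, and that the interpretation via $v(\cdot)$ and $c_\textsc{u}(\cdot)$ belongs outside the proof) are sound but do not change the route.
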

\begin{proof}
Following the definition of certified recovery in Def. \ref{def:cert_recovery}, 
the condition $c_1(\textit{x})=c_2(\textit{x})=\textit{True}$ implies the condition $[\forall \textit{x}'\in\mathbb{A}(\textit{x}), g_1(\textit{x}')=g_1(\textit{x}),g_2(\textit{x}')=g_2(\textit{x})]$. Given that $g_1(\textit{x})=g_2(\textit{x})$, we have the condition $[\forall \textit{x}'\in\mathbb{A}(\textit{x}), g_1(\textit{x}')=g_2(\textit{x}')]$ holds.
\end{proof}




Thm. \ref{thm:new-cd-cert} establishes a novel connection between the recovery certification provided by the underlying recovery defenders and the guarantee of issuing warnings in the \textit{CrossCert-base} framework.
The underlying intuition is as follows:
If one of the two base recovery defenders can recover the benign label of every malicious version of a benign sample, then our defender can detect the presence of the mistakenly recovered label of every malicious version of that benign sample produced by the other defender in the pair.
As a result, we design our detection certification function
$c_\textsc{d}(\textit{x})$ to return \textit{True} if the condition $[c_1(x)=\textit{True}]\lor[c_2(\textit{x})=\textit{True}\land g_1(\textit{x})= g_2(\textit{x})]$ holds.
In this way,
by Thm. \ref{thm:new-cd-cert}, we obtain that the condition 
$[\forall \textit{x}'\in\mathbb{A}(\textit{x}), g_1(\textit{x}')= g_1(\textit{x})\lor g_2(\textit{x}')= g_2(\textit{x})=g_1(\textit{x})]$ should also hold
%
if $c_\textsc{d}(\textit{x}) = True$.
It implies that  the condition 
$[g_1(\textit{x}')=g_2(\textit{x}')]$ 
must be violated if the condition $[g_1(\textit{x}')\neq g_1(\textit{x})]$ is met. 
By the formulation of the warning verification function $v(.)$ of \textit{CrossCert-base},  $v(\textit{x}')$ must return \textit{False} for every malicious version $\textit{x}'$ of the benign sample \textit{x} under this condition.
By Def. \ref{def:cert_detect},
\textit{x} is a certifiably detectable sample.

\begin{thm}[A Condition for Detection Certification]
\label{thm:new-cd-cert}
Given a sample \textit{x} and two recovery defenders $R_1=\langle g_1, v_1, c_1\rangle$ and $R_2=\langle g_2, v_2,  c_2\rangle$. 
If the condition
$[c_1(x)=\textit{True}]\lor[c_2(\textit{x})=\textit{True}\land g_1(\textit{x})= g_2(\textit{x})]$
holds, 
then the condition $[\forall \textit{x}' \in \mathbb{A}(\textit{x})$, $g_1(\textit{x})\neq g_1(\textit{x}') \implies g_1(\textit{x}')\neq g_2(\textit{x}')]$ holds.
\end{thm}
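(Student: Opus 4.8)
The plan is to prove the contrapositive-style implication by a straightforward case split on the disjunction in the hypothesis, using Def.~\ref{def:cert_recovery} to unpack what each $c_i(\textit{x}) = \textit{True}$ guarantees. Fix an arbitrary $\textit{x}' \in \mathbb{A}(\textit{x})$ and assume the antecedent $g_1(\textit{x}) \neq g_1(\textit{x}')$ holds; the goal is $g_1(\textit{x}') \neq g_2(\textit{x}')$.

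First I would handle the case $c_1(\textit{x}) = \textit{True}$. By Def.~\ref{def:cert_recovery} applied to $R_1$, this means $g_1(\textit{x}'') = g_1(\textit{x})$ for every $\textit{x}'' \in \mathbb{A}(\textit{x})$, in particular $g_1(\textit{x}') = g_1(\textit{x})$. This directly contradicts the assumed antecedent $g_1(\textit{x}) \neq g_1(\textit{x}')$, so this case is vacuous and the implication holds trivially.

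Next I would handle the case $c_2(\textit{x}) = \textit{True} \land g_1(\textit{x}) = g_2(\textit{x})$. By Def.~\ref{def:cert_recovery} applied to $R_2$, we get $g_2(\textit{x}') = g_2(\textit{x})$, and combining with $g_2(\textit{x}) = g_1(\textit{x})$ yields $g_2(\textit{x}') = g_1(\textit{x})$. Since the antecedent gives $g_1(\textit{x}') \neq g_1(\textit{x})$, we conclude $g_1(\textit{x}') \neq g_2(\textit{x}')$, as required. Since the hypothesis is the disjunction of these two cases, the conclusion holds in all cases.

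I do not anticipate a genuine obstacle here — the theorem is essentially a bookkeeping consequence of the recovery-certification definition, and the only thing to be careful about is keeping the roles of $R_1$ and $R_2$ straight (the antecedent is about $R_1$'s behavior, the second disjunct's recovery guarantee is about $R_2$, and the bridge between them is the label-agreement clause $g_1(\textit{x}) = g_2(\textit{x})$). The mild subtlety worth spelling out explicitly is that in the second case one must invoke $g_1(\textit{x}) = g_2(\textit{x})$ at the benign sample to transport $R_2$'s recovery guarantee into a statement comparable with $g_1(\textit{x}')$; without that clause the two disjuncts would not combine.
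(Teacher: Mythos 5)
Your proposal is correct and matches the paper's own proof essentially step for step: the same case split on the two disjuncts, the same vacuity argument for Case~1 via $R_1$'s recovery guarantee, and the same chaining $g_2(\textit{x}')=g_2(\textit{x})=g_1(\textit{x})\neq g_1(\textit{x}')$ in Case~2. No gaps.
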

\begin{proof}
Case 1: 
If the condition $[c_1(x)=\textit{True}]$ holds, then by Def. \ref{def:cert_recovery}, the condition $[\forall \textit{x}'\in\mathbb{A}(\textit{x}), g_1(\textit{x}')= g_1(\textit{x})]$ holds, which implies that the antecedent $[g_1(\textit{x}')\neq g_1(\textit{x})]$ would never hold. Case 2:
If the condition $[c_2(\textit{x})=\textit{True}\land g_1(\textit{x})= g_2(\textit{x})]$ holds, then, by Def. \ref{def:cert_recovery}, the condition $[\forall \textit{x}'\in\mathbb{A}(\textit{x}), g_2(\textit{x}')= g_2(\textit{x})=g_1(\textit{x})]$ holds. If $g_1(\textit{x}')\neq g_1(\textit{x})$, we must have $g_1(\textit{x}')\neq g_2(\textit{x}')$.
\end{proof} 
%
\subsection{\textit{CrossCert} Framework Design}\label{sec:cc_framework}
This section presents the design of \textit{CrossCert}. Fig. \ref{fig:Overview} depicts the overview of \textit{CrossCert}.

\begin{figure}[tb]
\centering
\includegraphics[width=0.99\textwidth]{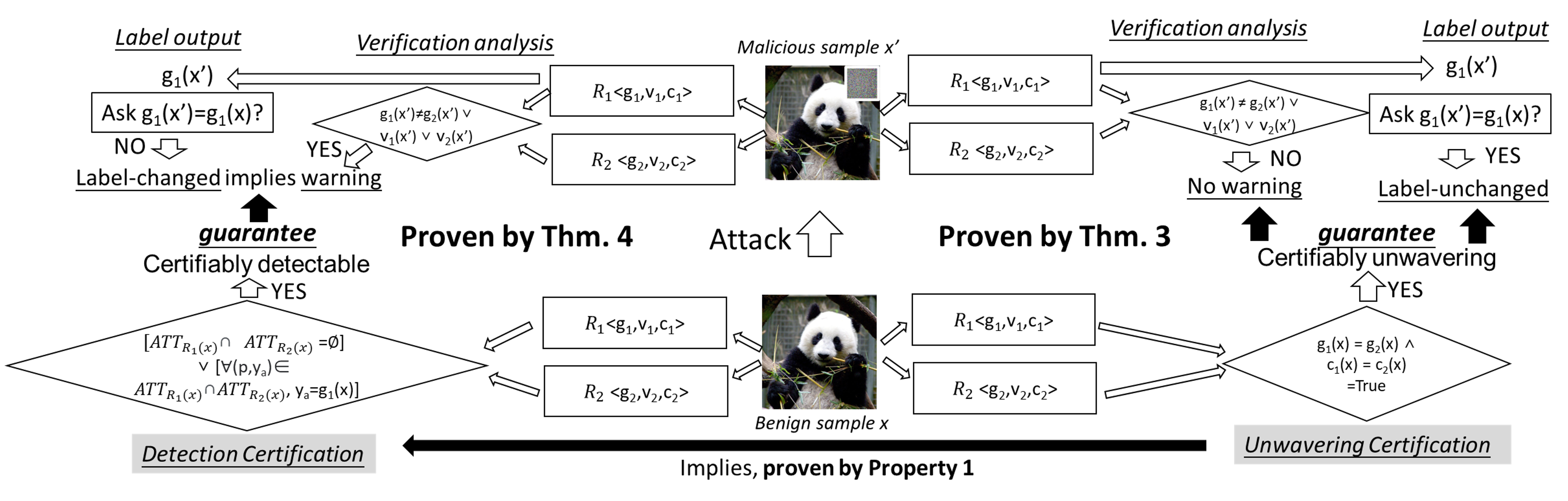}
\caption{Overview of \textit{CrossCert}. \textit{CrossCert} adopts a masking-based recovery defender 
$R_1 = \langle g_1, v_1, c_1\rangle$ and a voting-based recovery defender $R_2 = \langle g_2, v_2, c_2\rangle$. If the respective conditions are met, \textit{x} is certifiably detectable and certifiably unwavering, respectively. A certifiably unwavering sample is also a certifiably detectable sample, proven by Property \ref{property-cu-imply-cr}. For a certifiably detectable sample \textit{x}, \textit{CrossCert} guarantees that it must issue a warning for any malicious version $\textit{x}'$ around \textit{x} in the warning verification analysis if the recovered labels of \textit{x} and $\textit{x}'$ differ, proven by Thm. \ref{thm: intersection}. For a certifiably unwavering sample $\textit{x}$, \textit{CrossCert} guarantees that no label change can occur for any malicious samples around $\textit{x}$ and will not raise any warning in the warning verification analysis, proven by Thm. \ref{thm:consistency}.}
\label{fig:Overview}
\end{figure}

\subsubsection{Limitation of CrossCert-base}
The certification analyses of \textit{CrossCert-base} are only built on top of the composition of the underlying recovery defenders' outputs.
The underlying relationship (among the labels, mutants, and certification conditions of recovery defenders) to produce inconsistent outputs has yet to be understood.
%
Our \textbf{insight} is:
{if we can determine under what condition the pair of recovery defenders do not make the same mistake (e.g., do not output the same non-benign 
label for each malicious version of a benign sample), 
we can certify the sample as certifiably detectable.
%
To achieve this goal, we analyze under what conditions the base certified recovery defenders may mistakenly recover a non-benign label for a malicious version of the given benign sample. We present these analyses in Section \ref{sec:analysis-masking-recovery} for masking-based recovery and Section \ref{sec:analysis-voting-recovery} for voting-based recovery. We formulate \textit{CrossCert} by extending \textit{CrossCert-base} (see Section \ref{sec:formulation_of_cc}) to incorporate them along with the theoretical guarantee (see Section \ref{sec:guarantee_of_cc}).

\subsubsection{Formulation of \textit{CrossCert}}
\label{sec:formulation_of_cc}
\textit{CrossCert} $D = \langle g, v, c\rangle $ 
composes two certified recovery defenders, where the primary and secondary base recovery defenders are denoted by $R_1=\langle g_1, v_1, c_1\rangle$ and $R_2=\langle g_2, v_2, c_2\rangle$, respectively, where $R_1$ is a masking-based defender (i.e., $R_m$ described in Section \ref{sec:masking-recovery}) with a revised version of the prediction function $g_1$ (i.e., Alg. \ref{alg:patchcleanser_revision}), and $R_2$ is a voting-based defender (i.e., $R_v$ described in Section \ref{sec:voting-recovery}).
The components in \textit{CrossCert} are defined below:

 \begin{itemize}
     \item 
 $g(\textit{x}):=g_1(\textit{x})$.
 \item 
 $v(\textit{x}):=\mathbf{1}[g_1(\textit{x})\neq g_2(\textit{x})\lor v_1(\textit{x})=\textit{True}\lor v_2(\textit{x})=\textit{True}]$. 
 \item 
$c(\textit{x}):=\langle c_\textsc{u}(\textit{x}), c_\textsc{d}(\textit{x})\rangle$.
 \item
   $c_\textsc{u}(\textit{x}):=\mathbf{1}[g_1(\textit{x})=g_2(\textit{x})\land c_1(\textit{x})=c_2(\textit{x})=\textit{True}]$.   
\item 
 $c_\textsc{d}(\textit{x}):= \mathbf{1}[
 [\mathbb{ATT}_{R_1}(\textit{x})\cap\mathbb{ATT}_{R_2}(\textit{x})=\emptyset]\lor[\forall (\textsc{p},y_a)\in\mathbb{ATT}_{R_1}(\textit{x})\cap\mathbb{ATT}_{R_2}(\textit{x}), y_a=g_1(\textit{x})]]$. 
  \end{itemize}
  

Compared to \textit{CrossCert-base}, in \textit{CrossCert}, the warning verification function $v(.)$ covers more cases. 
Indeed, one design goal of \textit{CrossCert} is to guarantee detection certification, which requires theorem development because we need to ensure the covering of all possible malicious samples of every certifiably detectable, and this coverage cannot be verified/measured by empirical experiments, resulting in Thm.~\ref{thm: intersection}.
The result of Thm.~\ref{thm: intersection} is in the style of 
``$C_1 \implies [\forall \textit{x}' \in \mathbb{A}(\textit{x}), g(\textit{x}) \neq g(\textit{x}') \implies C_2]$''. The conditions $C_1$ and $C_2$ become the functions $c_\textsc{d}(.)$ and  $v(.)$ in \textit{CrossCert}, respectively.
In \textit{CrossCert-base}, the functions $v_1(.)$ and $v_2(.)$ always return \textit{False}, which 
is simplified 
into the current condition for $v(.)$ in \textit{CrossCert-base}.
Another design goal of \textit{CrossCert} is to guarantee unwavering certification, which also requires theorem development, resulting in Thm.~\ref{thm:consistency}.
Thm.~\ref{thm:consistency} is in the style of ``$C_3 \implies 
 [\forall \textit{x}' \in \mathbb{A}(\textit{x}), g(\textit{x}) = g(\textit{x}') \land ${$\neg C_2]$}''.
 The condition $C_3$ becomes the function  $c_\textsc{u}(.)$ in  \textit{CrossCert}, and \textit{CrossCert-base} also adopts this function. Here, $C_3$ deduces the negation of $C_2$, which corresponds to our purpose of suppressing warnings for the samples.
%
 Additionally, same as \textit{CrossCert-base},  \textit{CrossCert} is also a certified recovery defender.

In the following subsections, we will first present how we modify {$g_1(.)$} in \textit{CrossCert-base} (i.e., {replacing} Alg. \ref{alg:patchcleanser_original} by Alg. \ref{alg:patchcleanser_revision}) to support \textit{CrossCert}.
Then, we will present how we determine a set that contains all possible attack configurations of a recovery defender, each capturing where a patch \emph{can} be placed in a benign sample \textit{x} to produce malicious samples
and which malicious label for the resulting malicious samples derived from \textit{x} can be and yet the defender does not issue a warning on it (as Lem. \ref{lem:pc_analysis} for $R_1$ ($R_m$) and Lem. \ref{lem:voting_analysis} for $R_2$ ($R_v$)). 
After expanding the set with all possible pairs of a patch region and the benign label of \textit{x} (to include all cases in which attackers fail to change the prediction label), we obtain an expanded set and refer to it as $\mathbb{ATT}_{R_1}(\textit{x})$ for $R_1$ and $\mathbb{ATT}_{R_2}(\textit{x})$ for $R_2$ stated in the formulation of \textit{CrossCert} above.
Finally, we will formulate the conditions using these two expanded sets (based on the above-stated insight) to offer the detection certification guarantee (Thm. \ref{thm: intersection}) and present the recovery certification guarantee of Alg. \ref{alg:patchcleanser_revision} (Lem. \ref{lem:pc_analysis_revised}) and the unwavering certification guarantee of \textit{CrossCert} (Thm. \ref{thm:consistency}).



\subsubsection{Analysis of Masking-Based Recovery}
\label{sec:analysis-masking-recovery}
The warning verification function $v(.)$ of \textit{CrossCert} uses the warning verification functions of the two underlying recovery defenders for cross-checking. However, existing recovery defenders do not provide any warnings for their recovered labels, even if those are suspicious.
This section presents our minor modification on PatchCleanser (Alg. \ref{alg:patchcleanser_original}). 
We will show the correctness of the modified algorithm and 
derive a set to contain the possible malicious labels that the benign sample's malicious versions may attempt to mislead the defender, along with their patch regions.

Recall that PatchCleanser cannot guarantee the output in Case III in Alg. \ref{alg:patchcleanser_original} to be benign, but a recovered label that PatchCleanser guarantees to be benign may be output in Case III.
Therefore, we modify Alg. \ref{alg:patchcleanser_original} in such a way that no label warranted by PatchCleanser to be benign will be output in Case III. 
By doing so, we can associate a warning (which will be returned by $v_1(.)$) with every recovered label to indicate whether PatchCleanser \emph{fails to} guarantee this recovered label to be benign. 


Alg. \ref{alg:patchcleanser_revision} shows our lightly revised version of Alg. \ref{alg:patchcleanser_original}.
In the original algorithm (Alg.~\ref{alg:patchcleanser_original}), lines 6--13 enumerates the mutants in $\mathbb{M}_{\textit{min}}$ (i.e., these mutants producing non-majority labels).
A carefully crafted malicious version of a certifiably recoverable sample can lead the algorithm to generate a non-empty $\mathbb{M}_{\textit{min}}'$ at line 9, thereby outputting a benign label at line 14.
To fix this issue, we modify line 6 from iterating over $\mathbb{M}_{\textit{min}}$ to iterating over $\mathbb{M}$. 

We next present our analysis result based on Alg. \ref{alg:patchcleanser_revision}.
Lem. \ref{lem:pc_analysis_revised} proves the guarantee for the benign label recovery and the suppression of warning.
We need this guarantee because the \textit{CrossCert} framework is built on top of Alg. \ref{alg:patchcleanser_revision} to serve as its prediction function $g(.)$.
The proof follows what we have discussed above and the idea in the original proof for Alg. \ref{alg:patchcleanser_original} in \cite{xiang2022patchcleanser}.
Lem. \ref{lem:pc_analysis_revised} is a direct consequence of the TMA condition for our revised algorithm (Alg. \ref{alg:patchcleanser_revision}). 

\begin{lem}[Recovery Certification of PatchCleanser with Revised Prediction]
\label{lem:pc_analysis_revised}
Let  $R_m=\langle g, v, c \rangle$ be a masking-based recovery, 
in which the prediction function $g$ is Alg. \ref{alg:patchcleanser_revision}. 
If the TMA condition (i.e., $[\exists y \in \mathcal{Y}, \forall \textsc{m}_0, \textsc{m}_1 \in \mathbb{M}, h((\textsc{J}-\textsc{m}_0-\textsc{m}_1)\odot \textit{x})=y]$ where $h$ is the base classifier of $R_m$) holds for a benign sample \textit{x}, then {the label of every malicious version of \textit{x} returned by $g$ must be benign} (i.e., the condition 
$[\forall \textit{x}'\in\mathbb{A}(\textit{x}), g(\textit{x}')=g(\textit{x})]$ holds), and Alg. \ref{alg:patchcleanser_revision} will never reach line 14 (Case III).
\end{lem}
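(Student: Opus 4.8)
The plan is to mirror the structure of Xiang et al.'s original correctness proof for Alg.~\ref{alg:patchcleanser_original}, but re-run the case analysis for the modified loop (iterating over all of $\mathbb{M}$ rather than $\mathbb{M}_{\textit{min}}$) and the modified fallback (Case~III now emits a warning rather than a label). Fix a benign sample $\textit{x}$ satisfying the TMA condition with witness label $y$, and let $\textit{x}' \in \mathbb{A}(\textit{x})$ be an arbitrary malicious version, so $\textit{x}' = (\textsc{J}-\textsc{p})\odot \textit{x} + \textsc{p}\odot \textit{x}''$ for some patch region $\textsc{p}\in\mathbb{P}$. Because $\mathbb{M}$ is a covering mask set, pick $\textsc{m}^\star \in \mathbb{M}$ with $\textsc{p}\odot\textsc{m}^\star = \textsc{p}$; then $(\textsc{J}-\textsc{m}^\star)\odot\textit{x}' = (\textsc{J}-\textsc{m}^\star)\odot\textit{x}$, and more generally $(\textsc{J}-\textsc{m}^\star-\textsc{m}')\odot\textit{x}' = (\textsc{J}-\textsc{m}^\star-\textsc{m}')\odot\textit{x}$ for every $\textsc{m}'\in\mathbb{M}$. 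By the TMA condition, $h$ agrees with $y$ on all of these, so the mask $\textsc{m}^\star$ is a ``clean'' first-round mask for $\textit{x}'$ whose second-round mutants are unanimously $y$.

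Next I would argue that Alg.~\ref{alg:patchcleanser_revision} applied to $\textit{x}'$ returns $(y, \textit{False})$ and never reaches line~14. Split on the first-round behaviour. If $\mathbb{M}_{\textit{min}} = \emptyset$ (Case~I), then $h(\textit{x}'_\textsc{m}) = y_{\textit{maj}}$ for every $\textsc{m}$, including $\textsc{m}^\star$; but $h((\textsc{J}-\textsc{m}^\star)\odot\textit{x}') = y$, so $y_{\textit{maj}} = y$ and the algorithm returns $(y,\textit{False})$. Otherwise we enter the revised loop over \emph{all} $\textsc{m}\in\mathbb{M}$. I would show two things: (i) the iteration that reaches $\textsc{m} = \textsc{m}^\star$ must trigger Case~II with output $y$ — its second-round mutants $h((\textsc{J}-\textsc{m}^\star-\textsc{m}')\odot\textit{x}')$ all equal $y$ by the paragraph above, so $\mathbb{M}_{\textit{min}}' = \emptyset$ and the algorithm returns $(y_{\textit{maj}}', \textit{False}) = (y, \textit{False})$; and (ii) no \emph{earlier} iteration $\textsc{m}\in\mathbb{M}$ can return a wrong label. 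For (ii) the key sub-claim — exactly the one Xiang et al. establish — is that whenever some iteration's $\mathbb{M}_{\textit{min}}'$ is empty, the returned agreed label $y_{\textit{maj}}'$ must equal $y$: because $\textsc{m}'$ ranges over the covering set $\mathbb{M}$, one of the second-round masks covers $\textsc{p}$, forcing a second-round mutant that coincides with a two-masked mutant of the benign $\textit{x}$, which $h$ labels $y$; unanimity then pins $y_{\textit{maj}}' = y$. Hence any Case~II return is the benign label $y$, and since the loop is guaranteed to hit such a return by iteration $\textsc{m}^\star$ at the latest, line~14 is unreachable. Combining, $g(\textit{x}') = y$; taking $\textit{x}' = \textit{x}$ gives $g(\textit{x}) = y$, so $g(\textit{x}') = g(\textit{x})$ for all $\textit{x}'\in\mathbb{A}(\textit{x})$, and the warning flag is always \textit{False}, establishing the claim.

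The main obstacle is sub-claim (ii): arguing that \emph{every} Case~II exit — not just the one at $\textsc{m}^\star$ — yields the benign label, even though the revised loop now visits first-round masks $\textsc{m}$ that may already place a wrong label on $\textit{x}'_\textsc{m}$. The resolution is that the exit test ``$\mathbb{M}_{\textit{min}}' = \emptyset$'' demands unanimity of the \emph{second}-round mutants over the full covering set $\mathbb{M}$, and within that set there is always a $\textsc{m}'$ covering the patch $\textsc{p}$; that single mutant is forced to label $y$ (it equals a benign two-masked mutant), so unanimity can only be achieved at the value $y$. This is precisely the structural reason the original PatchCleanser analysis goes through, and the modification to line~6 only enlarges the set of iterations over which this invariant must be checked — it does not weaken the invariant — while the change at line~14 is exactly what converts the previously-unsound Case~III into a warning, so it is never exercised under TMA. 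I would cite the corresponding step in \cite{xiang2022patchcleanser} to keep the routine mask-algebra verification brief.
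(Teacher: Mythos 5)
Your proposal is correct and follows essentially the same route as the paper's proof: both identify the covering mask $\textsc{m}_0$ (your $\textsc{m}^\star$) for the patch region, use the TMA condition to force unanimity of its second-round mutants on the benign label $y$, and conclude that the revised loop must exit at Case~II with $y$ before line~14 can be reached. Your sub-claim~(ii) — that any \emph{earlier} Case~II exit is also pinned to $y$ because the second-round covering set always contains a mask hiding $\textsc{p}$ — is left implicit in the paper's proof, so your write-up is in fact slightly more complete on that point.
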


\begin{proof}
Compared to Alg. \ref{alg:patchcleanser_original}, Alg. \ref{alg:patchcleanser_revision} does not modify the logic for Case I (lines 1--5). 
We now prove how Alg. \ref{alg:patchcleanser_revision} recovers the benign label, excluding the cases handled by Case I. 
Suppose Alg. \ref{alg:patchcleanser_revision} accepts
a malicious sample as input such that the sample is produced by the attack configuration ($\textsc{p},y_a$).
Suppose the benign version \textit{x}  of the malicious sample satisfies the TMA condition (see Section~\ref{sec:masking-recovery}), which is $[\exists y \in \mathcal{Y}, \forall \textsc{m}_1, \textsc{m}_2 \in \mathbb{M}, h((\textsc{J}-\textsc{m}_1-\textsc{m}_2)\odot \textit{x})=y]$. Note that $g(\textit{x})=y$, output by Case I.
Then, by the definition of the mask set, for any patch region $\textsc{p}\in\mathbb{P}$, there must exist a mask $\textsc{m}_0\in \mathbb{M}$ such that $\textsc{m}_0\odot\textsc{p}=\textsc{p}$. Therefore, the mutant produced from the malicious sample $\textit{x}'$ with the mask $\textsc{m}_0$ must be the same as the mutant from the benign sample $\textit{x}$ with the mask $\textsc{m}_0$. 
(We arbitrarily choose $\textsc{m}_0 = \textsc{m}_2$ instead of
$\textsc{m}_0 = \textsc{m}_1$.)
So, we obtain $[
 \forall \textsc{m}_1 \in \mathbb{M},
h((\textsc{J}-\textsc{m}_1-\textsc{m}_0)\odot \textit{x}')=y]$ since $[
 \forall \textsc{m}_1 \in \mathbb{M},
h((\textsc{J}-\textsc{m}_1-\textsc{m}_0)\odot \textit{x})=y]$ holds due to the TMA condition being held on \textit{x}.
As lines 6-9 enumerate all masks, 
an empty $\mathbb{M}_{\textit{min}}'$ must be found in line 9 when enumerated to $m_0$.
Therefore, we must have $y_{\textit{maj}}' = y = g(x)$ output at line 11, and line 14 is never reached.
%
\end{proof}

Suppose \textit{x} is a benign sample that satisfies the TMA condition.
As presented in Section \ref{sec:masking-recovery}, 
PatchCleanser with Alg. \ref{alg:patchcleanser_original} for label prediction will certify 
\textit{x} as certifiably recoverable, where 
Alg. \ref{alg:patchcleanser_original} computes  $\mathbb{M}_{min}=\emptyset$ at line 2 and returns the label at line 4.
We know that lines 1--4 in Alg. \ref{alg:patchcleanser_original} and 
Alg. \ref{alg:patchcleanser_revision} are the same.
So, Alg. \ref{alg:patchcleanser_original} and 
Alg. \ref{alg:patchcleanser_revision} return the same label for \textit{x}. 
Moreover, by Lem. \ref{lem:pc_analysis_revised}, we know that \textit{x} is certifiably recoverable by PatchCleanser with Alg. \ref{alg:patchcleanser_revision} for label prediction. 
Since both PatchCleanser with Alg. \ref{alg:patchcleanser_original} and PatchCleanser with Alg. \ref{alg:patchcleanser_revision} use the TMA condition to give a recovery certificate and produce the same label for these certified samples, 
\textit{CrossCert} and \textit{CrossCert-base} share the same certifiably recovery accuracy
 with PatchCleanser with Alg. \ref{alg:patchcleanser_original}.
Furthermore, in both \textit{CrossCert} and \textit{CrossCert-base},   
$c_\textsc{u}(\textit{x})$ is formulated as $\mathbf{1}[g_1(\textit{x})=g_2(\textit{x})\land c_1(\textit{x})=c_2(\textit{x})=\textit{True}]$.
Suppose the two defenders are configured with the same $R_2$.
Then, we know that $c_\textsc{u}(\textit{x})$ in \textit{CrossCert} returns \textit{True} if and only if $c_\textsc{u}(\textit{x})$ in \textit{CrossCert-base} returns \textit{True}.
As such, their unwavering certification outcomes for any benign sample are also the same.

Lem. \ref{lem:pc_analysis} below formulates an important concept. 
It quantifies a conservative set that contains \emph{all} attack configurations (without raising a warning) of a masking-based recovery defender $R_{m}$ for \textit{x}, where an attack configuration indicates where a patch \emph{can} appear in \textit{x} and a possible malicious label produced by $R_{m}$ for the resulting malicious version of \textit{x}.
It formulates the condition $\textit{NAC}_{\textit{masking}}(.)$ to quantify the membership of the set.
Lem. \ref{lem:pc_analysis} states the following: 
If an attack configuration (\textsc{p}, $y_a$) produces a malicious sample to make Alg. \ref{alg:patchcleanser_revision} output a non-benign label $y_a$ in line 4 or 11 (without raising a warning), then the patch region must be cover by two masks (possibly identical), and the mutants produced by the benign version of this malicious sample with these two masks should output $y_a$ by the base model $h$.
The idea of the proof is through a case-by-case analysis. 
With a malicious sample as input, it backtracks the output variable to compute the values generated by masking in order to finally determine the constraints on the benign version of the malicious sample.
After we formulate a similar condition (Lem. \ref{lem:voting_analysis}) for a voting-based recovery defender, we will use both conditions to construct two sets to detect the patch regions with inconsistent label pairs across the two sets to obtain an improved theoretical guarantee for detection certification (Thm. \ref{thm: intersection}), compared to \textit{CrossCert-base}.
\begin{lem}[Necessary Attack Condition for Masking-based Recovery]\label{lem:pc_analysis}
Suppose  $R_m =\langle g, v, c \rangle$ is a masking-based recovery defender, where the prediction function $g$ is Alg. \ref{alg:patchcleanser_revision}. 
Suppose \textit{x} is a benign sample and an attacker wants to manipulate the label of \textit{x}'s malicious version to be output at line 4 (Case I) or line 11 (Case II) in Alg. \ref{alg:patchcleanser_revision} (and thus $v(\textit{x}') = \textit{False}$).
Then, the attack configuration $(\textsc{p},y_a)$ must satisfy the condition 
$\textit{NAC}_{\textit{masking}}(\textit{x}, \textsc{p}, y_a) =$
$[\textsc{p}\odot \textsc{m}_d=\textsc{p}
\land 
y_a=y_d$ where $(\textsc{m}_d,y_d)\in \{(\textsc{m}_d,y_d)\mid \textsc{m}_1, \textsc{m}_2 \in \mathbb{M} \land h((\textsc{J}-\textsc{m}_1-\textsc{m}_2)\odot \textit{x})=y_d\neq g(\textit{x})\}
\land$
 $\textsc{p}\in\mathbb{P} \land \textsc{m}_d=\textsc{m}_1+\textsc{m}_2]$ (Note that we allow $\textsc{m}_1=\textsc{m}_2$.).
\end{lem}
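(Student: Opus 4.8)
The plan is to trace, for an arbitrary malicious sample $\textit{x}' \in \mathbb{A}(\textit{x})$ produced by an attack configuration $(\textsc{p}, y_a)$ whose label $y_a$ is returned by Alg.~\ref{alg:patchcleanser_revision} at line~4 or line~11 (so that $v(\textit{x}') = \textit{False}$ and $y_a \neq g(\textit{x})$), which masking operations on $\textit{x}'$ caused that label to appear, and then to push the constraint back onto the benign sample $\textit{x}$ by exploiting the fact that a mask covering $\textsc{p}$ neutralizes the patch. The key enabling observation, used twice, is: if $\textsc{m} \in \mathbb{M}$ satisfies $\textsc{m}\odot\textsc{p} = \textsc{p}$, then $(\textsc{J}-\textsc{m})\odot\textit{x}' = (\textsc{J}-\textsc{m})\odot\textit{x}$, and more generally $(\textsc{J}-\textsc{m}-\textsc{m}'')\odot\textit{x}' = (\textsc{J}-\textsc{m}-\textsc{m}'')\odot\textit{x}$ for any further mask $\textsc{m}''$, because the patch pixels are removed either way. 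I would first record this as a small sub-claim, since it is the workhorse of the whole argument.

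Next I would split into the two cases according to which line produced $y_a$. \emph{Case line~4 (Case~I):} here $\mathbb{M}_{\textit{min}} = \emptyset$ for $\textit{x}'$, meaning $h(\textit{x}'_\textsc{m}) = y_a$ for every $\textsc{m} \in \mathbb{M}$. By the mask-set covering property there is $\textsc{m}_0 \in \mathbb{M}$ with $\textsc{m}_0\odot\textsc{p} = \textsc{p}$; by the sub-claim $\textit{x}'_{\textsc{m}_0} = \textit{x}_{\textsc{m}_0}$, so $h(\textit{x}_{\textsc{m}_0}) = y_a$. I then take $\textsc{m}_1 = \textsc{m}_2 = \textsc{m}_0$, so that $(\textsc{J}-\textsc{m}_1-\textsc{m}_2)\odot\textit{x} = (\textsc{J}-\textsc{m}_0)\odot\textit{x}$ (since the elementwise product of $(\textsc{J}-\textsc{m}_0)$ with itself is $(\textsc{J}-\textsc{m}_0)$), giving $h((\textsc{J}-\textsc{m}_1-\textsc{m}_2)\odot\textit{x}) = y_a \neq g(\textit{x})$; setting $\textsc{m}_d = \textsc{m}_1 + \textsc{m}_2 = \textsc{m}_0$ and $y_d = y_a$ we get $\textsc{p}\odot\textsc{m}_d = \textsc{p}$ and $y_a = y_d$, which is exactly $\textit{NAC}_{\textit{masking}}(\textit{x}, \textsc{p}, y_a)$. \emph{Case line~11 (Case~II):} here the loop variable is some first-round mask $\textsc{m}_a \in \mathbb{M}$ for which $\mathbb{M}_{\textit{min}}' = \emptyset$, i.e.\ $h((\textsc{J}-\textsc{m}_a-\textsc{m}')\odot\textit{x}') = y_a$ for every $\textsc{m}' \in \mathbb{M}$. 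Again pick $\textsc{m}_0 \in \mathbb{M}$ with $\textsc{m}_0\odot\textsc{p} = \textsc{p}$; instantiating $\textsc{m}' = \textsc{m}_0$ and using the generalized sub-claim, $h((\textsc{J}-\textsc{m}_a-\textsc{m}_0)\odot\textit{x}') = h((\textsc{J}-\textsc{m}_a-\textsc{m}_0)\odot\textit{x}) = y_a$. Now set $\textsc{m}_1 = \textsc{m}_a$, $\textsc{m}_2 = \textsc{m}_0$, $\textsc{m}_d = \textsc{m}_1 + \textsc{m}_2$, $y_d = y_a$; then $\textsc{p}\odot\textsc{m}_d = \textsc{p}$ because $\textsc{m}_0$ already covers $\textsc{p}$ and $+$ is the elementwise max (adding $\textsc{m}_a$ cannot remove coverage), and $h((\textsc{J}-\textsc{m}_1-\textsc{m}_2)\odot\textit{x}) = y_d \neq g(\textit{x})$, establishing $\textit{NAC}_{\textit{masking}}$.

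The main obstacle I anticipate is the bookkeeping in Case~II, specifically justifying two things cleanly: first, that when line~11 fires, the sample is \emph{not} handled by Case~I for $\textit{x}'$ — but this is automatic from the control flow (the \texttt{if} at line~3 would have returned otherwise), so it is a non-issue once stated; second, and more delicate, that in the revised algorithm the loop at line~6 ranges over \emph{all} of $\mathbb{M}$, not just $\mathbb{M}_{\textit{min}}$ — this is precisely the modification introduced in Alg.~\ref{alg:patchcleanser_revision}, and it is essential here because the particular first-round mask $\textsc{m}_a$ that produces the output need not have a non-majority vote on $\textit{x}'$; I would flag explicitly that this is why the modification from $\mathbb{M}_{\textit{min}}$ to $\mathbb{M}$ matters for the analysis. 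A secondary subtlety is handling the matrix-operator identities ($\textsc{m}\odot\textsc{m} = \textsc{m}$ for binary $\textsc{m}$, $(\textsc{m}_1+\textsc{m}_2)\odot\textsc{p} = \textsc{p}$ whenever $\textsc{m}_2\odot\textsc{p} = \textsc{p}$, and $(\textsc{J}-\textsc{m}_1-\textsc{m}_2) = (\textsc{J}-\textsc{m}_1)\odot(\textsc{J}-\textsc{m}_2) = (\textsc{J}-\textsc{m}_d)$ when $\textsc{m}_d = \textsc{m}_1+\textsc{m}_2$), which follow directly from the elementwise definitions in Section~\ref{sec:pre} and which I would verify once at the start rather than re-derive in each case.
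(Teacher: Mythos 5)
Your proposal is correct and follows essentially the same route as the paper's proof: both arguments hinge on picking a mask that covers the patch region, observing that the corresponding (single- or double-) masked mutant of $\textit{x}'$ coincides with that of $\textit{x}$, and back-propagating the output label $y_a$ into the condition $\textit{NAC}_{\textit{masking}}$, with $\textsc{m}_1=\textsc{m}_2$ handling the line-4 case. The only difference is organizational — the paper first cases on whether the TMA condition holds and then on self-consistency of the malicious mutant set, whereas you case directly on which return line fires — which changes nothing of substance.
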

\begin{proof}
\emph{Case 1}: Suppose the benign sample satisfies the TMA condition $[\exists y \in \mathcal{Y}, \forall \textsc{m}_1, \textsc{m}_2 \in \mathbb{M}, h((\textsc{J}-\textsc{m}_1-\textsc{m}_2)\odot \textit{x})=y]$.
In this case, by Lem. \ref{lem:pc_analysis_revised}, no attack configuration is feasible. 
\emph{Case 2}: Suppose the benign sample does not satisfy the TMA condition. 
\emph{Case 2a}: Suppose the mutant set of the malicious sample is self-consistent. 
By the definition of the mask set, there is a mask, denoted by $\textsc{m}_1$, {to cover} the patch (and the patch region) of the malicious sample; therefore, the prediction label of the mutant {for the sample} generated by this mask, denoted by $y_d$, should be the same as the corresponding one {for} \textit{x}.
Let us denote it by $(\textsc{J}-\textsc{m}_1)\odot\textit{x}'$, which is equivalent to $(\textsc{J}-\textsc{m}_1)\odot\textit{x}$. 
All mutants in this set should be predicted to $y_d$ (since this set is self-consistent), which means $y_{\textit{maj}} = y_d$ at line 4. 
Therefore, 
we should have $y_a = y_d$ so that the output label is the label required by the attack configuration. So, the attack configuration (\textsc{p}, $y_a$) should satisfy
the condition $h((\textsc{J}-\textsc{m}_1)\odot\textit{x}) = y_d = y_a \neq g(\textit{x})\land\textsc{m}_1\odot\textsc{p}=\textsc{p}\land \textsc{m}_1\in\mathbb{M}$. 
\emph{Case 2b}: On the other hand, suppose that the mutant set of the malicious sample is not self-consistent.  Then, $\mathbb{M}_{\textit{min}}$ will contain the non-majority mutants at line 2. So, the algorithm goes to line 6 instead of reaching line 4.
The proof is identical to Case 2a except for the following substitutions: 
the term ``mutant set of the malicious sample'' is replaced by the term  ``mutant set of the mutant of the malicious sample'', 
the term ``a mask'' ($\textsc{m}_1$) is replaced by the term  
``{the} union of two masks'' 
($\textsc{m}_1$+
$\textsc{m}_2$),
the variable $y_{\textit{maj}}$ is replaced by  $y_{\textit{maj}}'$, and
the target line for label output is changed from line 4 to line 11.
As such, we can infer that the attack configuration (\textsc{p}, $y_a$) must satisfy
the condition $h((\textsc{J}-\textsc{m}_1-\textsc{m}_2)\odot\textit{x}) = y_d = y_a \neq g(\textit{x})\land(\textsc{m}_1+\textsc{m}_2)\odot\textsc{p}=\textsc{p}\land \textsc{m}_1,\textsc{m}_2\in\mathbb{M}$.
\end{proof}

\subsubsection{Analysis of Voting-Based recovery}
\label{sec:analysis-voting-recovery}
In the last subsection, we have formulated the condition $\textit{NAC}_{\textit{masking}}(.)$ to find an interesting set of attack configurations for a masking-based recovery defender.
 In this subsection, we formulate a similar condition, known as $\textit{NAC}_{\textit{voting}}(.)$, for the same purpose but for a voting-based recovery defender, which is defined in Lem. \ref{lem:voting_analysis}.

Lem. \ref{lem:voting_analysis}  captures the following intuition: 
Suppose \textit{x} is a benign sample predicted to a benign label $y$.
Suppose further that the number of ablations of \textit{x} that do not overlap with a patch region $\textsc{p}$ and are predicted to $y$ is smaller than the number of ablations of \textit{x} that are either predicted to another label $y_a$ ($\neq y$) or overlap with $\textsc{p}$. In this case, $y_a$ is a possible malicious label of patch region $\textsc{p}$. 

 
\begin{lem}[Necessary Attack Condition of Voting-based Recovery]\label{lem:voting_analysis}
Suppose  $R_v=\langle g,v,c \rangle$, where $g(\textit{x})=$
\emph{arg}$max_{y}
|\{\textit{x}_\textsc{b} \in \mathbb{X}_\mathbb{B}(\textit{x}) \mid f(\textit{x}_\textsc{b}) = y\}|$ is 
a voting-based recovery defender.
Given a benign sample \textit{x}, if an attacker wants to manipulate the output label of a specific malicious version of \textit{x}, the attack configuration $(\textsc{p},y_a)$ must satisfy the condition 
$\textit{NAC}_{voting}(\textit{x}, \textsc{p},y_a)$, which is defined as
$[|\{\textit{x}_\textsc{b}\mid\textit{x}_\textsc{b} \in \mathbb{X}_\mathbb{B}(\textit{x})-
\mathbb{X}_\textsc{p}(\textit{x})\land f(\textit{x}_\textsc{b})=y_a\}|+|\mathbb{X}_\textsc{p}(\textit{x})| \geq|\{\textit{x}_\textsc{b}\mid\textit{x}_\textsc{b} \in \mathbb{X}_\mathbb{B}(\textit{x})-
\mathbb{X}_\textsc{p}(\textit{x})\land f(\textit{x}_\textsc{b})=g(\textit{x})\}|]$, where $\mathbb{X}_\textsc{p}(\textit{x})=\{ \textit{x}_\textsc{b} \mid \textit{x}_\textsc{b}=\textit{x}\odot\textsc{b} \land \textsc{b}\odot\textsc{p}\neq \textsc{O} \land\textsc{b}\in\mathbb{B}\}$
is the set of ablations for \textit{x} that overlap with \textsc{p}.
\end{lem}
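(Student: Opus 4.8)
The plan is to prove this by analyzing the worst-case behavior of the voting-based recovery defender $R_v$ when a patch is placed in the region $\textsc{p}$. First I would fix a benign sample $\textit{x}$ with $g(\textit{x}) = y$ and an arbitrary malicious version $\textit{x}' \in \mathbb{A}(\textit{x})$ produced by a patch region $\textsc{p}$. The key observation is that the ablated mutants of $\textit{x}'$ split into two groups: those whose ablated region $\textsc{b}$ does not overlap $\textsc{p}$ (i.e., $\textsc{b} \odot \textsc{p} = \textsc{O}$), which are \emph{identical} to the corresponding mutants of $\textit{x}$ because the visible part $\textit{x}' \odot \textsc{b} = \textit{x} \odot \textsc{b}$; and those whose ablated region overlaps $\textsc{p}$ (i.e., $\textsc{b} \in \mathbb{X}_\textsc{p}(\textit{x})$'s index set), whose votes the attacker can set arbitrarily in the worst case. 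So the number of votes the base model $f$ casts for $y$ on mutants of $\textit{x}'$ is at most the number of mutants in $\mathbb{X}_\mathbb{B}(\textit{x}) - \mathbb{X}_\textsc{p}(\textit{x})$ that $f$ maps to $y$ (the non-overlapping ones cannot be flipped away from $y$ only if they already voted $y$; I would note the attacker only \emph{loses} potential $y$-votes here, never gains). Dually, the number of votes for a target label $y_a \neq y$ is at least the number of non-overlapping mutants already voting $y_a$, and can be boosted by turning every overlapping mutant into a $y_a$-vote, giving at least $|\{\textit{x}_\textsc{b} \in \mathbb{X}_\mathbb{B}(\textit{x}) - \mathbb{X}_\textsc{p}(\textit{x}) \mid f(\textit{x}_\textsc{b}) = y_a\}| + |\mathbb{X}_\textsc{p}(\textit{x})|$.

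Next I would assemble the contrapositive. For the attack configuration $(\textsc{p}, y_a)$ to be feasible — meaning there exists $\textit{x}' \in \mathbb{A}(\textit{x})$ with patch region $\textsc{p}$ such that $g(\textit{x}') = y_a$ — the $\textit{arg\,max}$ must land on $y_a$, so in particular the worst-case achievable $y_a$-count must be at least the minimum guaranteed $y$-count. Plugging in the two bounds from the previous paragraph yields exactly the inequality $|\{\textit{x}_\textsc{b} \in \mathbb{X}_\mathbb{B}(\textit{x}) - \mathbb{X}_\textsc{p}(\textit{x}) \mid f(\textit{x}_\textsc{b}) = y_a\}| + |\mathbb{X}_\textsc{p}(\textit{x})| \geq |\{\textit{x}_\textsc{b} \in \mathbb{X}_\mathbb{B}(\textit{x}) - \mathbb{X}_\textsc{p}(\textit{x}) \mid f(\textit{x}_\textsc{b}) = g(\textit{x})\}|$, which is precisely $\textit{NAC}_{voting}(\textit{x}, \textsc{p}, y_a)$. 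Hence any feasible attack configuration satisfies $\textit{NAC}_{voting}$, which is the claim (note the lemma is a one-directional necessary condition, so no converse is required). I would also handle the degenerate sub-case $y_a = g(\textit{x})$ separately or simply observe that it trivially satisfies the inequality since the two ``non-overlapping, label $= y_a$'' and ``non-overlapping, label $= g(\textit{x})$'' counts coincide and $|\mathbb{X}_\textsc{p}(\textit{x})| \geq 0$.

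The main obstacle I anticipate is bookkeeping precision around the tie-breaking semantics of $\textit{arg\,max}$ and the exact direction of the inequality (strict versus non-strict). The definition of $g$ uses $\textit{arg\,max}$, so a tie could still be resolved in the attacker's favor; this is why the necessary condition is stated with $\geq$ rather than $>$ — I would make sure the write-up justifies that a tie suffices for a successful attack (or, more conservatively, that feasibility implies the non-strict inequality regardless of how ties break), so the bound is genuinely conservative and the set it defines genuinely contains \emph{all} attack configurations. A secondary subtlety is confirming that $\mathbb{X}_\mathbb{B}(\textit{x}) - \mathbb{X}_\textsc{p}(\textit{x})$ and the set of mutants of $\textit{x}'$ not overlapping $\textsc{p}$ are in bijection via the shared ablated regions $\textsc{b}$, so that counting votes over the former is legitimate; this follows from $\textit{x}' \odot \textsc{b} = \textit{x} \odot \textsc{b}$ whenever $\textsc{b} \odot \textsc{p} = \textsc{O}$, which I would state explicitly as the pivotal identity. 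Everything else is routine counting.
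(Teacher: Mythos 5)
Your proposal is correct and follows essentially the same route as the paper's own proof: both count the worst-case achievable votes for $y_a$ (all ablations overlapping $\textsc{p}$ plus the non-overlapping ones already voting $y_a$) against the guaranteed floor of votes for $g(\textit{x})$ (the non-overlapping ones voting $g(\textit{x})$, which the patch cannot touch), and derive the non-strict inequality as a necessary condition for the $\textit{arg\,max}$ to land on $y_a$. Your explicit treatment of the bijection between unaffected mutants of $\textit{x}$ and $\textit{x}'$ and of the tie-breaking semantics is a welcome tightening of details the paper only gestures at ("depending on how the argmax function is implemented"), but it does not change the argument.
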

\begin{proof}
Suppose the patch of a malicious sample is in a patch region \textsc{p} and the intended label of the sample is $y_a$. 
Thus, we obtain $\mathbb{X}_\textsc{p}$. 
In the worst case, the ablations predicted to $y_a$ are either overlap with \textsc{p} (so that the prediction labels are manipulable by the patch) or do not overlap with \textsc{p} but are predicted to $y_a$. 
The number of such ablations is $|\{\textit{x}_\textsc{b} \in \mathbb{X}_\mathbb{B}(\textit{x})-
\mathbb{X}_\textsc{p}(\textit{x}) \mid f(\textit{x}_\textsc{b})=y_a\}|+|\mathbb{X}_\textsc{p}(\textit{x})|$.
To make this set of ablations large enough to lead $g(.)$ to output $y_a$, this number 
should be larger than or equal to (depending on how the argmax function is implemented) the number of ablations that do not overlap with \textsc{p} and are predicted to $g(\textit{x})$, i.e.,
$|\{\textit{x}_\textsc{b} \in \mathbb{X}_\mathbb{B}(\textit{x})-\mathbb{X}_\textsc{p}(\textit{x}) \mid f(\textit{x}_\textsc{b})=g(\textit{x})\}|$,
whose prediction labels cannot be manipulable by the patch.   
\end{proof}


\subsubsection{Theoretical Guarantee of \textit{CrossCert}}\label{sec:guarantee_of_cc}
In this subsection, we link the analysis results and outputs of the base recovery defenders we have established in the last two subsections to cross-check and develop the guarantees for unwavering and detection certifications of \textit{CrossCert}.
Thm. \ref{thm:consistency} and  Thm. \ref{thm: intersection} establish the connections between the conditions for unwavering certification and suppressing warnings, as well as those between the conditions for detection certification and raising warnings, respectively.

\begin{thm}[Unwavering Certification-Warning Verification Consistency]
\label{thm:consistency}
Given a benign sample \textit{x}, 
a masking-based recovery defender $R_1=\langle g_1, v_1, c_1\rangle$ that uses Alg. \ref{alg:patchcleanser_revision} for prediction, and a voting-based recovery defender
$R_2=\langle g_2, v_2,c_2\rangle$,
if $[g_1(\textit{x})=g_2(\textit{x})\land c_1(\textit{x})=c_2(\textit{x})=\textit{True}]$ holds, 
then $[\forall \textit{x}'\in \mathbb{A}(\textit{x}), g_1(\textit{x}')=g_1(\textit{x})\land g_1(\textit{x}')=g_2(\textit{x}')\land v_1(\textit{x}')=v_2(\textit{x}')=\textit{False}]$ holds.
\end{thm}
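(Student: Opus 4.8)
The plan is to prove the three conjuncts in turn, leveraging the already-established results about the underlying recovery defenders. The first two conjuncts, $g_1(\textit{x}')=g_1(\textit{x})$ and $g_1(\textit{x}')=g_2(\textit{x}')$, are exactly the conclusion of Thm.~\ref{thm:theory_1} (which is proved purely from the definition of certified recovery and applies to any pair of recovery defenders, including the masking/voting pair here). So I would first invoke Thm.~\ref{thm:theory_1} directly: the antecedent $[g_1(\textit{x})=g_2(\textit{x})\land c_1(\textit{x})=c_2(\textit{x})=\textit{True}]$ is identical, hence $[\forall \textit{x}'\in \mathbb{A}(\textit{x}), g_1(\textit{x}')=g_1(\textit{x})\land g_1(\textit{x}')=g_2(\textit{x}')]$ holds. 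That disposes of the recovery part of the claim with essentially no work.

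The remaining content is the new part: showing $v_1(\textit{x}')=v_2(\textit{x}')=\textit{False}$ for every $\textit{x}'\in\mathbb{A}(\textit{x})$. Here I would use the concrete definitions of the warning functions of the two base defenders. For $R_1$ (PatchCleanser with Alg.~\ref{alg:patchcleanser_revision}): since $c_1(\textit{x})=\textit{True}$ means $\textit{x}$ satisfies the TMA condition, Lem.~\ref{lem:pc_analysis_revised} tells us that for every malicious version $\textit{x}'$ the algorithm outputs the benign label and \emph{never reaches line 14 (Case III)} — it returns at line 4 or line 11, which are exactly the branches carrying the \textit{False} warning flag. Hence $v_1(\textit{x}')=\textit{False}$. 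For $R_2$ (voting-based), the warning function $v_2$ always returns \textit{False} by construction (recall that voting-based recovery defenders do not use a warning verification function, so it defaults to \textit{False}); so $v_2(\textit{x}')=\textit{False}$ trivially for all inputs. Combining, $v_1(\textit{x}')=v_2(\textit{x}')=\textit{False}$, completing the proof.

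The main obstacle — or rather the only place requiring care — is the $v_1(\textit{x}')=\textit{False}$ step, because it is the sole point where we must reason about the internal control flow of Alg.~\ref{alg:patchcleanser_revision} rather than just the black-box recovery guarantee. The key is to state precisely that Lem.~\ref{lem:pc_analysis_revised} gives not only $g(\textit{x}')=g(\textit{x})$ but also the stronger structural fact that line 14 is never reached, so that the return must occur at line 4 or line 11, both of which output \textit{False} as the warning label; I would make sure to cite that "never reaches line 14" clause of Lem.~\ref{lem:pc_analysis_revised} explicitly rather than letting it be implicit. Everything else is a direct appeal to prior results, so the proof is short.

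\begin{proof}
Since the antecedent $[g_1(\textit{x})=g_2(\textit{x})\land c_1(\textit{x})=c_2(\textit{x})=\textit{True}]$ matches that of Thm.~\ref{thm:theory_1}, applying Thm.~\ref{thm:theory_1} gives $[\forall \textit{x}'\in \mathbb{A}(\textit{x}), g_1(\textit{x}')=g_1(\textit{x})\land g_1(\textit{x}')=g_2(\textit{x}')]$. It remains to show $v_1(\textit{x}')=v_2(\textit{x}')=\textit{False}$ for every $\textit{x}'\in\mathbb{A}(\textit{x})$. Because $R_2$ is a voting-based recovery defender, its warning verification function is absent and returns \textit{False} by default, so $v_2(\textit{x}')=\textit{False}$. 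Because $R_1$ uses Alg.~\ref{alg:patchcleanser_revision} and $c_1(\textit{x})=\textit{True}$ is the TMA condition for \textit{x}, Lem.~\ref{lem:pc_analysis_revised} applies: for every malicious version $\textit{x}'$ of \textit{x}, Alg.~\ref{alg:patchcleanser_revision} never reaches line 14 (Case III), hence it returns at line 4 (Case I) or line 11 (Case II), both of which output \textit{False} as the warning label; thus $v_1(\textit{x}')=\textit{False}$. Combining the three conjuncts yields $[\forall \textit{x}'\in \mathbb{A}(\textit{x}), g_1(\textit{x}')=g_1(\textit{x})\land g_1(\textit{x}')=g_2(\textit{x}')\land v_1(\textit{x}')=v_2(\textit{x}')=\textit{False}]$.
\end{proof}
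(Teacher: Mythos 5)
Your proposal is correct and follows essentially the same route as the paper's own proof: invoke Thm.~\ref{thm:theory_1} for the two label conjuncts, use the ``never reaches line 14'' clause of Lem.~\ref{lem:pc_analysis_revised} (triggered by $c_1(\textit{x})=\textit{True}$, i.e., the TMA condition) to get $v_1(\textit{x}')=\textit{False}$, and note that $v_2$ returns \textit{False} by construction. No gaps.
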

\begin{proof}
The proof follows the proof of Thm. \ref{thm:theory_1}.
Still, we need to prove that the condition
$[
\forall \textit{x}'\in \mathbb{A}(\textit{x}),
v_1(\textit{x}')=v_2(\textit{x}')=\textit{False}
]$ can be inferred from the condition stated in the theorem.
Since $c_1(\textit{x})$ is \textit{True}, we know that $R_1$ certifies the sample \textit{x}.
By Lem. \ref{lem:pc_analysis_revised}, $R_1$ with Alg. \ref{alg:patchcleanser_revision} for prediction will never reach line 14 with any $\textit{x}'\in \mathbb{A}(\textit{x})$ as input.
Thus, we have $[
\forall \textit{x}'\in \mathbb{A}(\textit{x}),
v_1(\textit{x}')=\textit{False}
]$.
Since we also know that the condition $[v_2(.) = \textit{False}]$ always holds, the result follows.
\end{proof}

An attempt to attack a certified recovery defender may succeed or fail.
Each possible attack configuration (\textsc{p}, $y_a$) for the masking-based recovery defender $R_1$ to output $y_a$ in Cases I and  II will satisfy $\textit{NAC}_{\textit{masking}}(\textit{x}, \textsc{p},y_a)$ by Lem. \ref{lem:pc_analysis}, and those attack configurations for the voting-based recovery defender $R_2$ will satisfy $\textit{NAC}_{\textit{voting}}(\textit{x}, \textsc{p},y_a)$ by Lem. \ref{lem:voting_analysis}.
Also, each failed attack by manipulating a benign sample \textit{x} into a malicious sample for such a defender will produce the benign label of \textit{x}.
Since every label returned in Case III in Alg.~\ref{alg:patchcleanser_revision} for $R_1$ will raise a warning by design, 
we can determine all possible pairs of patch region and target label applicable to a benign sample \textit{x} that \emph{neither} $R_1$ \emph{nor} $R_2$
can issue any warnings.
We define two sets $\mathbb{ATT}_{R_1}(\textit{x})$ and  $\mathbb{ATT}_{R_2}(\textit{x})$ to contain all such possible pairs for $R_1$ and $R_2$
as
$\mathbb{ATT}_{R_1}(\textit{x})=
  \{ 
    (\textsc{p},y_a) \mid \textit{NAC}_{\textit{masking}}(\textit{x}, \textsc{p},y_a)
  \}
  \cup
  \{(\textsc{p},e)\mid\textsc{p}\in\mathbb{P} \land e=g_1(\textit{x})\}
$
and
$\mathbb{ATT}_{R_2}(\textit{x})=
  \{ 
    (\textsc{p},y_a) \mid \textit{NAC}_{\textit{voting}}(\textit{x}, \textsc{p},y_a)
  \}
  \cup
  \{(\textsc{p},e)\mid\textsc{p}\in\mathbb{P} \land e=g_2(\textit{x})\}$%
  $
$,
respectively.

Thm. \ref{thm: intersection} formalizes the intuition of not making any common mistake for detection: Suppose these two sets share no common pair of patch region and target label that can manipulate \textit{x} to cause the corresponding two base recovery defenders to produce the same malicious label. 
In this case, they cannot predict the same label for the same malicious version of \textit{x} unless the label is benign.


\begin{thm}[Detection Certification-Warning Verification consistency]\label{thm: intersection}
Given a sample \textit{x}, two recovery defenders $R_1=\langle g_1, v_1, c_1\rangle$ and $R_2=\langle g_2, v_2, c_2\rangle$ as defined in Thm. \ref{thm:consistency},
along with their $\mathbb{ATT}_{R_1}(\textit{x})$ and $\mathbb{ATT}_{R_2}(\textit{x})$. 
If $[\mathbb{ATT}_{R_1}(\textit{x})\cap\mathbb{ATT}_{R_2}(\textit{x})=\emptyset]\lor[\forall (\textsc{p},y_a)\in\mathbb{ATT}_{R_1}(\textit{x})\cap\mathbb{ATT}_{R_2}(\textit{x}), y_a=g_1(\textit{x})]$, 
then $[\forall \textit{x}' \in \mathbb{A}(\textit{x})$, $g_1(\textit{x})\neq g_1(\textit{x}') \implies g_1(\textit{x}')\neq g_2(\textit{x}')\lor v_1(\textit{x}')=\textit{True}\lor v_2(\textit{x}')=\textit{True}]$.
\end{thm}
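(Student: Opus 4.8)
The plan is to argue by contradiction, reasoning from a single arbitrary malicious version. Fix $\textit{x}' \in \mathbb{A}(\textit{x})$ and, to prove the implication, assume its antecedent $g_1(\textit{x}) \neq g_1(\textit{x}')$; toward a contradiction suppose the consequent fails, i.e.\ $g_1(\textit{x}') = g_2(\textit{x}')$, $v_1(\textit{x}') = \textit{False}$, and $v_2(\textit{x}') = \textit{False}$. Write $y_a$ for the common label $g_1(\textit{x}') = g_2(\textit{x}')$. Since $\textit{x}' \in \mathbb{A}(\textit{x})$, there is a patch region $\textsc{p} \in \mathbb{P}$ witnessing that $\textit{x}'$ is a malicious version of $\textit{x}$, so $(\textsc{p}, y_a)$ is an attack configuration realized by the \emph{same} $\textit{x}'$ against both $R_1$ and $R_2$. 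The goal is to show $(\textsc{p}, y_a) \in \mathbb{ATT}_{R_1}(\textit{x}) \cap \mathbb{ATT}_{R_2}(\textit{x})$ and then collide this with the theorem's hypothesis.

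For the $R_1$-side membership I would first note that, because $R_1$ runs Alg.~\ref{alg:patchcleanser_revision} and $v_1(\textit{x}') = \textit{False}$, the label $y_a$ must have been returned in Case~I (line~4) or Case~II (line~11), since Case~III always returns \textit{True}. Then split on the value of $y_a$: if $y_a = g_1(\textit{x})$, then $(\textsc{p}, y_a)$ belongs to the component $\{(\textsc{p},e)\mid \textsc{p}\in\mathbb{P}\land e=g_1(\textit{x})\}$ of $\mathbb{ATT}_{R_1}(\textit{x})$ by construction; if $y_a \neq g_1(\textit{x})$, then Lem.~\ref{lem:pc_analysis} applies with intended label $y_a$ and yields $\textit{NAC}_{\textit{masking}}(\textit{x}, \textsc{p}, y_a)$, placing $(\textsc{p}, y_a)$ in the first component. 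Either way $(\textsc{p}, y_a) \in \mathbb{ATT}_{R_1}(\textit{x})$.

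For the $R_2$-side membership, observe that $R_2$ is voting-based, so $v_2(\textit{x}') = \textit{False}$ holds automatically and the only relevant fact is that $\textit{x}'$ with patch region $\textsc{p}$ drives $R_2$ to output $y_a$. Lem.~\ref{lem:voting_analysis} then gives $\textit{NAC}_{\textit{voting}}(\textit{x}, \textsc{p}, y_a)$, with the $\{(\textsc{p},e)\mid e=g_2(\textit{x})\}$ component of $\mathbb{ATT}_{R_2}(\textit{x})$ absorbing the degenerate case in which the target label is the benign one; hence $(\textsc{p}, y_a) \in \mathbb{ATT}_{R_2}(\textit{x})$. Combining the two gives $(\textsc{p}, y_a) \in \mathbb{ATT}_{R_1}(\textit{x}) \cap \mathbb{ATT}_{R_2}(\textit{x})$, so this intersection is non-empty.

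Finally I would invoke the theorem's hypothesis. If $\mathbb{ATT}_{R_1}(\textit{x}) \cap \mathbb{ATT}_{R_2}(\textit{x}) = \emptyset$, that contradicts the non-emptiness just established. Otherwise the hypothesis forces $y_a = g_1(\textit{x})$ for every element of the intersection, in particular for $(\textsc{p}, y_a)$; but $y_a = g_1(\textit{x}')$, so $g_1(\textit{x}') = g_1(\textit{x})$, contradicting the assumed antecedent $g_1(\textit{x}) \neq g_1(\textit{x}')$. Either branch gives a contradiction, so the supposition fails and the consequent holds for this (arbitrary) $\textit{x}'$. The main obstacle I anticipate is not conceptual but bookkeeping: pinning down that $v_1(\textit{x}') = \textit{False}$ genuinely restricts the output of Alg.~\ref{alg:patchcleanser_revision} to Cases~I/II, and correctly handling the ``benign target label'' edge case in both $\mathbb{ATT}$ sets so that the case split on $y_a$ versus $g_1(\textit{x})$ stays exhaustive; Lems.~\ref{lem:pc_analysis} and~\ref{lem:voting_analysis} do the substantive work, and the remainder is set membership plus the two-way case analysis on the hypothesis.
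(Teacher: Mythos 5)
Your proof is correct and rests on exactly the same machinery as the paper's: Lem.~\ref{lem:pc_analysis} and the Case~I/II-vs-Case~III structure of Alg.~\ref{alg:patchcleanser_revision} for the $R_1$ membership, Lem.~\ref{lem:voting_analysis} plus the benign-label component for the $R_2$ membership, and the two-branch hypothesis for the final collision. The only difference is organizational — you take the contrapositive (assume the consequent fails and force $(\textsc{p},y_a)$ into the intersection), whereas the paper argues forward with an explicit case split on membership in the two $\mathbb{ATT}$ sets; your version compresses the paper's Cases 1a/1b/2a/2b into a single argument.
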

\begin{proof}
We first recall that any malicious sample 
corresponds to one pair of a patch region and a label for ${R_1}$ and another pair for ${R_2}$ such that these two pairs share the same patch region. 
%
%
\emph{Case 1}: Suppose that $[\mathbb{ATT}_{R_1}(\textit{x})\cap\mathbb{ATT}_{R_2}(\textit{x})=\emptyset]$ holds. 
It implies that no common pair $(\textsc{p}, e)$ $\in \mathbb{ATT}_{R_1}(\textit{x})\cap\mathbb{ATT}_{R_2}(\textit{x})$ exists to make both defenders produce the same prediction label. 
There are two subcases to consider.
\emph{Case 1a}:
Suppose $[(\textsc{p}, g_1(\textit{x}')) \in \mathbb{ATT}_{R_1}(\textit{x}) \land (\textsc{p}, g_2(\textit{x}')) \in \mathbb{ATT}_{R_2}(\textit{x})]$ holds.
As there is no common pair between $\mathbb{ATT}_{R_1}(\textit{x})$ and  $\mathbb{ATT}_{R_2}(\textit{x})$,  
%
we have
 $g_1(\textit{x}')\neq g_2(\textit{x}')$.
\emph{Case 1b}:
Suppose $[(\textsc{p}, g_1(\textit{x}')) \notin \mathbb{ATT}_{R_1}(\textit{x}) 
\lor (\textsc{p}, g_2(\textit{x}')) \notin \mathbb{ATT}_{R_2}(\textit{x})]$ holds. 
\emph{Case 1b(i)}: Suppose $(\textsc{p}, g_1(\textit{x}')) \notin \mathbb{ATT}_{R_1}$ holds. By the definition of $\mathbb{ATT}_{R_1}(\textit{x})$, neither $NAC_{masking}(\textit{x}, \textsc{p}, g_1(\textit{x}'))$ nor $g_1(\textit{x}')=g_1(\textit{x})$ hold.
By Lem.~\ref{lem:pc_analysis}, Alg. \ref{alg:patchcleanser_revision} reaches Case III, and $v_1(\textit{x}')$ returns \textit{True}. 
\emph{Case 1b(ii)}: Suppose $(\textsc{p}, g_2(\textit{x}')) \notin \mathbb{ATT}_{R_2}$ holds. By the definition of $\mathbb{ATT}_{R_2}(\textit{x})$,
neither $NAC_{voting}(\textit{x}, \textsc{p}, g_2(\textit{x}'))$ nor $g_2(\textit{x}')=g_2(\textit{x})$ hold. 
{Since} $g_2(\textit{x}')\neq g_2(\textit{x})$, by Lem. \ref{lem:voting_analysis}, the condition 
$NAC_{voting}(\textit{x}, \textsc{p}, g_2(\textit{x}'))$ must hold.
There is a contradiction, indicating that this subcase is infeasible. 
So, for 
Case 1b, the condition
$[v_1(\textit{x}')=\textit{True}\lor v_2(\textit{x}')=\textit{True}]$ must hold.
\emph{Case 2}: Suppose that the condition 
$[\forall (\textsc{p},y_a)\in\mathbb{ATT}_{R_1}(\textit{x})\cap\mathbb{ATT}_{R_2}(\textit{x}), y_a=g_1(\textit{x})]$ holds.
\emph{Case 2a}: 
Suppose $[(\textsc{p}, g_1(\textit{x}')) \in \mathbb{ATT}_{R_1}(\textit{x}) \land (\textsc{p}, g_2(\textit{x}')) \in \mathbb{ATT}_{R_2}(\textit{x})]$ holds.
There are two subcases in Case 2a for consideration. 
\emph{Case 2a(i)}: Suppose  $(\textsc{p}, g_1(\textit{x}')) = (\textsc{p}, g_2(\textit{x}'))$.
In this case, this common pair must reside in $\mathbb{ATT}_{R_1}(\textit{x})\cap\mathbb{ATT}_{R_2}(\textit{x})$.
So, we must have 
$g_1(\textit{x}') = g_2(\textit{x}') = g_1(\textit{x})$, indicating that {the condition $[g_1(\textit{x})\neq g_1(\textit{x}')]$ in the antecedent stated in the theorem} does not hold. 
\emph{Case 2a(ii)}: Suppose $(\textsc{p}, g_1(\textit{x}')) \neq (\textsc{p}, g_2(\textit{x}'))$. 
We must have  $g_1(\textit{x}') \neq g_2(\textit{x}')$.
\emph{Case 2b}: Suppose $[(\textsc{p}, g_1(\textit{x}')) \notin \mathbb{ATT}_{R_1}(\textit{x}) 
\lor (\textsc{p}, g_2(\textit{x}')) \notin \mathbb{ATT}_{R_2}(\textit{x})]$ holds. 
The proof for Case 2b is identical to the proof for Case 1b.
\end{proof}

\subsubsection{Soundness and Completeness}
The two types of certification achieved by \textit{CrossCert} are sound (without false positives) and incomplete (with false positives) when certifying samples.

In Thm. \ref{thm:consistency}, the condition in the consequence component is
$[\forall \textit{x}'\in \mathbb{A}(\textit{x}), g_1(\textit{x}')=g_1(\textit{x})\land g_1(\textit{x}')=g_2(\textit{x}')\land v_1(\textit{x}')=v_2(\textit{x}')=\textit{False}]$.
It can be \emph{rewritten} into 
$[\forall \textit{x}'\in \mathbb{A}(\textit{x}), 
g_1(\textit{x}')=g_1(\textit{x})\land 
\neg [\neg 
[g_1(\textit{x}')=g_2(\textit{x}')\land v_1(\textit{x}')=v_2(\textit{x}')=\textit{False}]]
]$
$\Leftrightarrow$
$[\forall \textit{x}'\in \mathbb{A}(\textit{x}), 
g_1(\textit{x}')=g_1(\textit{x})\land 
\neg  
[g_1(\textit{x}')\neq g_2(\textit{x}')\lor v_1(\textit{x}')=\textit{True} \lor v_2(\textit{x}')=\textit{True}]]$
$\Leftrightarrow$
$[\forall \textit{x}'\in \mathbb{A}(\textit{x}), 
g_1(\textit{x}') = g_1(\textit{x}) \land 
\neg v(\textit{x}')]$(by the definition of $v(\textit{x}')$ of \textit{CrossCert}), which is the condition in Def.~\ref{def:cert_no_warning}
to quantify a certifiably unwavering sample.
Similarly, in Thm. \ref{thm: intersection}, the condition in the consequence component can be \emph{rewritten} into 
$[\forall \textit{x}'\in \mathbb{A}(\textit{x}), 
g_1(\textit{x}') \neq g_1(\textit{x}) \Rightarrow 
v(x')]
$, which is the condition in Def.~\ref{def:cert_detect}
to quantify a certifiably detectable sample.

For some specific samples, the actual lower bound needs not be as loose as the worst case.  
For instance, if exhausting all malicious versions of a specific sample that \textit{CrossCert} fails to certify is possible, the certification on the sample may still be possible.
Besides, the recovery certification of \textit{CrossCert} 
is sound and incomplete. 
Lem.~\ref{lem:pc_analysis_revised} has proven its soundness.
\textit{CrossCert} 
needs the TMA condition to hold to prove Lem. ~\ref{lem:pc_analysis_revised}, whereas Def. \ref{def:cert_recovery} does not need this condition.

\section{Evaluation}\label{sec:eva}

\subsection{Research Questions}
We aim to answer the following research questions through experiments.
\begin{itemize}
\item[RQ1] How does \textit{CrossCert} perform compared to state-of-the-art certified detection techniques?
\item[RQ2] To what extent does \textit{CrossCert} improve the effectiveness compared to \textit{CrossCert-base}?
\end{itemize}

\subsection{Experimental Setup}
\subsubsection{Environment}

All the code of our defenders and the experiments are implemented in Python 3.8 using the PyTorch 2.0.1 framework. 
We run all the experiments on a machine equipped with the Ubuntu 20.04 operating system with four 2080Ti GPU cards. 

\subsubsection{Datasets}
We adopt three popular image classification datasets including 1000-class ImageNet \cite{deng2009imagenet}, 100-class CIFAR100 \cite{krizhevsky2009learning} and 10-class CIFAR10 \cite{krizhevsky2009learning} as our evaluation datasets, which are widely adopted in previous patch robustness certification research \cite{xiang2021patchguard++,xiang2022patchcleanser, patchcensor, li2022vip, han2021scalecert}. 
ImageNet contains 1.3M training images and 50k validation images for 1000 classes, 
CIFAR100 contains 50k training images and 10k test images for 100 classes, and
CIFAR10 contains 50k training images and 10k test images for 10 classes. We download ImageNet from \url{image-net.org}, use its entire training set for fine-tuning, and regard its validation set as the test set for evaluation. We download CIFAR10 and CIFAR100 from torchvision \cite{torchvision2016} and use their whole training sets for fine-tuning and their test sets for evaluation. All images are resized to $224\times224$ in our experiments. 

\subsubsection{Baselines}
\label{sec:baselines}
We compare three top-performing certified detection defenders implemented in our infrastructure with \textit{CrossCert} (\textbf{CC}): Minority Reports Adapted (\textbf{MR+}) \cite{patchcensor}, PatchCensor (\textbf{PC}) \cite{patchcensor}, and \textbf{ViP} \cite{li2022vip}.
Specifically, we adopt two of the most popular 
architectures, Vision Transformer \cite{dosovitskiy2021an} (ViT-b16-224 with 86.6M parameters) and ResNet \cite{he2016deep} (ResNet-50 with 25.5M parameters), as the architectures of the base models of these defenders.
We obtain their architectures and pre-trained weights on ImageNet-21k \cite{deng2009imagenet} from \cite{vit-b16-224} and \cite{resnet-50}, respectively.
Following the main experiments in their papers, we use ResNet-50 for MR+ and ViT-b16-224 for PC and ViP as their base models.

For fine-tuning ResNet and ViT for MR+ and PC, we use SGD with a momentum of 0.9, set the batch size to 64 and the training epoch to 10, and reduce the learning rate by a factor of 10 after every 5 epochs with an initial learning rate of 0.001 for ViT and 0.01 for ResNet, following the fine-tuning process of vanilla models in \cite{xiang2022patchcleanser}.
ViP \cite{li2022vip} adopts the MAE model as its base model, which is the architecture of ViT-b16-224 pretrained and fine-tuned by the MAE's novel training scheme \cite{he2022masked}. 
%
We adopt the pretrained weights of ViT-b16-224 and fine-tuning scripts from the MAE official implementation \cite{MAE_code} and fine-tune the model for each of our three datasets following the default setting in the MAE official script.
We use the original implementations of ViP \cite{ViP_code} and PC/MR+ \cite{PatchCensor_code} to compute their certifiably detectable accuracy {and clean accuracy (see Section} \ref{sec:metrics}).

We further compare \textit{CrossCert} (CC) and \textit{CrossCert-base} (\textbf{CC-base}) with more state-of-the-art certified detection defenders, and  annotate them with the symbol $\star$: Minority Reports (\textbf{MR$_\star$}) \cite{mccoyd2020minority}, PatchGaurd++ (\textbf{PG++$_\star$}) \cite{xiang2021patchguard++}, ScaleCert (\textbf{SC$_\star$}) \cite{han2021scalecert}, Adapted Minority Reports (\textbf{MR+$_\star$}) \cite{patchcensor}, PatchCensor (\textbf{PC$_\star$}) \cite{patchcensor}, and ViP (\textbf{ViP$_\star$}) \cite{li2022vip} based on the reported results in the literature. 

\subsubsection{Implementation of \textit{CrossCert}}\label{sec:implement}
We adopt the same pretrained models (ViT-b16-224 and ResNet-50) from timm \cite{rw2019timm} stated in Section \ref{sec:baselines} as the {pretrained} base models of the base recovery defenders $R_1$ and $R_2$, to configure CC
into two instances \textbf{CC-ViT} and \textbf{CC-RN}, respectively, and  configure CC-base
into another two instances \textbf{CC-base-ViT} and \textbf{CC-base-RN}, respectively.
%
%
%
We adopt different state-of-the-art recovery defenders as our $R_1$ and $R_2$ because \textit{CrossCert-base} and \textit{CrossCert} use their diversity to gain certifiably detectable accuracy.
We adopt PatchCleanser \cite{xiang2022patchcleanser} with Alg. \ref{alg:patchcleanser_revision} for  prediction as our $R_1$
and
implement
our voting-based recovery defender following the common procedure described \cite{levine2020randomized,salman2022certified,li2022vip} as our $R_2$.
We fine-tune the base model of $R_1$ following the setting described in \cite{xiang2022patchcleanser}.
We also fine-tune the base model of $R_2$ with hyperparameters stated in \cite{salman2022certified} for ImageNet and CIFAR10, and adopt those for CIFAR10 to CIFAR100.


\subsubsection{Metrics}
\label{sec:metrics}
Suppose $\textit{x}$ is a sample with the ground truth $y$ in a test dataset $\mathbb{D}$.
A sample $\textit{x}$ is said to be correct iff the prediction of a defender under test for $x$ is $y$, i.e., $g(\textit{x})$ = $y$.
\textit{Clean accuracy} is the fraction of correct samples:
$acc_{\textit{clean}}=\frac{\mid\{\textit{x}\in\mathbb{D}\mid g(\textit{x})=y\}\mid}{\mid\mathbb{D}\mid}$.

We use the three predicates
$c_u(\textit{x})$, $c_d(\textit{x})$, and $c_r(\textit{x})$ returned by a defender to represent whether the defender certifies $\textit{x}$ to be
certifiably unwavering, certifiably detectable, and
certifiably recoverable, respectively.
Their certified accuracy
corresponds to
the fractions of correct samples that are 
certifiably unwavering, certifiably detectable, and 
certifiably recoverable, defined as
$acc_{\textit{cert}_u}=\frac{\mid\{\textit{x}\in\mathbb{D}\mid g(\textit{x})=y\land c_{u}(\textit{x})\}\mid}{\mid\mathbb{D}\mid}$,
$acc_{\textit{cert}_d}=\frac{\mid\{\textit{x}\in\mathbb{D}\mid g(\textit{x})=y\land c_d(\textit{x})\}\mid}{\mid\mathbb{D}\mid}$, and
$acc_{\textit{cert}_r}=\frac{\mid\{\textit{x}\in\mathbb{D}\mid g(\textit{x})=y\land c_r(\textit{x})\}\mid}{\mid\mathbb{D}\mid}$, respectively.

\subsubsection{Experimental Procedure}

We adopt the frequently-used patch sizes of 1\%, 2\%, and 3\% for ImageNet and 2\% and 2.4\% for CIFAR10 and CIFAR100 (where a patch region is a square) \cite{li2022vip,patchcensor,xiang2021patchguard++,han2021scalecert}.
%
For each combination of patch size and test sample in each test dataset to be certified by each of 
CC-base-ViT, CC-base-RN, CC-ViT, and CC-RN,
we first run the mutant generation scripts of $R_1$  and $R_2$ to generate their mutants
and obtain the prediction labels of all these mutants.
We then run the recovery certification and label prediction scripts 
to determine whether the sample
is certifiably recoverable for each of $R_1$ and $R_2$, and to obtain the prediction label of the sample produced by each of $R_1$ and $R_2$.
%
%
We next compute whether the sample is certifiably unwavering, whether it is certifiably detectable, and whether it is certifiably recoverable for each of CC-ViT, CC-RN, CC-base-ViT, and CC-base-RN.
We compute the metrics stated in Section \ref{sec:metrics}.

For each of MR+, PC, and ViP, like the above procedure, for each combination of patch size and test dataset, we apply the corresponding 
scripts in their official repositories to produce mutants, prediction labels, and certification outcomes for each test sample.

\begin{table}[tb]
\caption{Results of Certified Detection Techniques on CIFAR10}\label{tab:cifar10}
\resizebox{0.7\textwidth}{!}{
\begin{tabular}{|c|ccc|clc|}
\hline
Patch Size & \multicolumn{3}{c|}{0.4\% pixels}                                                              & \multicolumn{3}{c|}{2.4\% pixels}                                                              \\ \hline
Metrics    & \multicolumn{1}{c|}{$acc_{\textit{clean}}$} & \multicolumn{1}{c|}{$acc_{{cert}_d}$}      & $acc_{{cert}_u}$      & \multicolumn{1}{c|}{$acc_{\textit{clean}}$}   & \multicolumn{1}{c|}{$acc_{{cert}_d}$}      & $acc_{{cert}_u}$ \\ \hline
MR$_\star$         & \multicolumn{1}{c|}{87.60\%}            & \multicolumn{1}{c|}{82.50\%}        &       $\oplus$         & \multicolumn{1}{c|}{84.20\%}            & \multicolumn{1}{c|}{78.10\%}        &        $\oplus$        \\ \hline
PG++$_\star$       & \multicolumn{1}{c|}{82.00\%}            & \multicolumn{1}{c|}{78.80\%}        &       $\oplus$         & \multicolumn{1}{c|}{78.20\%}            & \multicolumn{1}{c|}{74.10\%}        &        $\oplus$        \\ \hline
SC$_\star$         & \multicolumn{1}{c|}{83.10\%}            & \multicolumn{1}{c|}{81.00\%}        &       $\oplus$         & \multicolumn{1}{c|}{78.90\%}            & \multicolumn{1}{c|}{75.30\%}        &        $\oplus$        \\ \hline
MR+$_\star$        & \multicolumn{1}{c|}{97.61\%}            & \multicolumn{1}{c|}{90.59\%}        &       $\oplus$         & \multicolumn{1}{c|}{97.70\%}            & \multicolumn{1}{c|}{83.30\%}        &        $\oplus$        \\ \hline
PC$_\star$         & \multicolumn{1}{c|}{98.77\%}            & \multicolumn{1}{c|}{95.70\%}        &       $\oplus$         & \multicolumn{1}{c|}{\textbf{98.81\%}}           & \multicolumn{1}{c|}{88.29\%}        &        $\oplus$        \\ \hline \hline
MR+        & \multicolumn{1}{c|}{97.64\%}            & \multicolumn{1}{c|}{87.92\%}        &       $\oplus$         & \multicolumn{1}{c|}{97.64\%}            & \multicolumn{1}{c|}{72.26\%}        &        $\oplus$        \\ \hline
PC         & \multicolumn{1}{c|}{98.72\%}            & \multicolumn{1}{c|}{95.73\%}        &       $\oplus$         & \multicolumn{1}{c|}{98.72\%}            & \multicolumn{1}{c|}{90.22\%}        &        $\oplus$        \\ \hline
ViP        & \multicolumn{1}{c|}{98.74\%}            & \multicolumn{1}{c|}{\textbf{97.59\%}}        &       $\oplus$         & \multicolumn{1}{c|}{98.74\%}            & \multicolumn{1}{c|}{\textbf{94.71\%}}        &        $\oplus$        \\ \hline
CC         & \multicolumn{1}{c|}{\textbf{98.97\%}}            & \multicolumn{1}{c|}{96.14\%}        & \textbf{78.75\%}        & \multicolumn{1}{c|}{98.70\%}            & \multicolumn{1}{c|}{91.57\%}        & \textbf{67.73\%}        \\ \hline
\end{tabular}
}
\end{table}

In addition, we also follow a best practice in the area of certified detection to extract the ImageNet and CIFAR10 results of MR+$_\star$ and PC$_\star$ from \cite{patchcensor}, ViP$_\star$ from \cite{li2022vip}, the ImageNet results of SC$_\star$, PG++$_\star$ and MR$_\star$ from \cite{li2022vip} and their CIFAR10 results from \cite{han2021scalecert} for comparison. We note that the literature does not report results on CIFAR100.

\begin{table}[]
\caption{Results of Certified Detection Techniques on ImageNet}\label{tab:imagenet}
\resizebox{\textwidth}{!}{
\begin{tabular}{|c|ccccccccc|}
\hline
Patch Size & \multicolumn{3}{c|}{1\% pixels}                                                                                              & \multicolumn{3}{c|}{2\% pixels}                                                                                              & \multicolumn{3}{c|}{3\% pixels}                                                                         \\ \hline
Metrics & \multicolumn{1}{c|}{$acc_{\textit{clean}}$} & \multicolumn{1}{c|}{$acc_{{cert}_d}$}      & \multicolumn{1}{c|}{$acc_{{cert}_u}$}     & \multicolumn{1}{c|}{$acc_{\textit{clean}}$} & \multicolumn{1}{c|}{$acc_{{cert}_d}$}      & \multicolumn{1}{c|}{$acc_{{cert}_u}$}      & \multicolumn{1}{c|}{$acc_{\textit{clean}}$} & \multicolumn{1}{c|}{$acc_{{cert}_d}$}      & \multicolumn{1}{c|}{$acc_{{cert}_u}$}      \\ \hline
MR$_\star$         & \multicolumn{9}{c|}{computationally expensive\tablefootnote{As stated in \cite{han2021scalecert}, MR is very computationally expensive for high-resolution images, taking almost a month for ImageNet.}}                                   \\ \hline
PG++$_\star$       & \multicolumn{1}{c|}{61.80\%}        & \multicolumn{1}{c|}{36.30\%}              & \multicolumn{1}{c|}{$\oplus$} & \multicolumn{1}{c|}{61.60\%}        & \multicolumn{1}{c|}{33.90\%}              & \multicolumn{1}{c|}{$\oplus$} & \multicolumn{1}{c|}{61.50\%}        & \multicolumn{1}{c|}{31.10\%}              & $\oplus$ \\ \hline
SC$_\star$         & \multicolumn{1}{c|}{62.80\%}        & \multicolumn{1}{c|}{60.40\%}              & \multicolumn{1}{c|}{$\oplus$} & \multicolumn{1}{c|}{58.50\%}        & \multicolumn{1}{c|}{55.40\%}              & \multicolumn{1}{c|}{$\oplus$} & \multicolumn{1}{c|}{56.40\%}        & \multicolumn{1}{c|}{52.80\%}              & $\oplus$ \\ \hline
MR+$_\star$\textsuperscript{\ref{same_data}}        & \multicolumn{1}{c|}{75.51\%}        & \multicolumn{1}{c|}{56.31\%}              & \multicolumn{1}{c|}{$\oplus$} & \multicolumn{1}{c|}{75.51\%}        & \multicolumn{1}{c|}{56.31\%}              & \multicolumn{1}{c|}{$\oplus$} & \multicolumn{1}{c|}{75.49\%}        & \multicolumn{1}{c|}{51.23\%}              & $\oplus$ \\ \hline
PC$_\star$\textsuperscript{\ref{same_data}}         & \multicolumn{1}{c|}{82.73\%}        & \multicolumn{1}{c|}{69.41\%}              & \multicolumn{1}{c|}{$\oplus$} & \multicolumn{1}{c|}{82.73\%}        & \multicolumn{1}{c|}{69.41\%}              & \multicolumn{1}{c|}{$\oplus$} & \multicolumn{1}{c|}{82.67\%}        & \multicolumn{1}{c|}{64.97\%}              & $\oplus$ \\ \hline
ViP$_\star$\textsuperscript{\ref{same_data}}        & \multicolumn{1}{c|}{\textbf{83.66\%}}        & \multicolumn{1}{c|}{\textbf{74.56\%}}              & \multicolumn{1}{c|}{$\oplus$} & \multicolumn{1}{c|}{\textbf{83.66\%}}        & \multicolumn{1}{c|}{\textbf{74.56\%}}              & \multicolumn{1}{c|}{$\oplus$} & \multicolumn{1}{c|}{\textbf{83.66\%}}        & \multicolumn{1}{c|}{\textbf{70.90\%}}              & $\oplus$ \\ \hline
\hline
MR+\tablefootnote{\label{same_data}The best results for using their masking strategy are shown. We note that their results for the patch region size of 1\% are the same as those of 2\% according to the formulations in their defenders.}         & \multicolumn{1}{c|}{80.71\%}        & \multicolumn{1}{c|}{56.34\%}              & \multicolumn{1}{c|}{$\oplus$} & \multicolumn{1}{c|}{80.71\%}        & \multicolumn{1}{c|}{56.34\%}              & \multicolumn{1}{c|}{$\oplus$} & \multicolumn{1}{c|}{80.71\%}        & \multicolumn{1}{c|}{50.29\%}              & $\oplus$ \\ \hline
PC\textsuperscript{\ref{same_data}}         & \multicolumn{1}{c|}{81.83\%}        & \multicolumn{1}{c|}{66.21\%}              & \multicolumn{1}{c|}{$\oplus$} & \multicolumn{1}{c|}{81.83\%}        & \multicolumn{1}{c|}{66.21\%}              & \multicolumn{1}{c|}{$\oplus$} & \multicolumn{1}{c|}{81.83\%}        & \multicolumn{1}{c|}{61.01\%}              & $\oplus$ \\ \hline
ViP\textsuperscript{\ref{same_data}}        & \multicolumn{1}{c|}{82.80\%}        & \multicolumn{1}{c|}{73.98\%}              & \multicolumn{1}{c|}{$\oplus$} & \multicolumn{1}{c|}{82.80\%}        & \multicolumn{1}{c|}{73.98\%}              & \multicolumn{1}{c|}{$\oplus$} & \multicolumn{1}{c|}{82.80\% }        & \multicolumn{1}{c|}{70.42\%}              & $\oplus$ \\ \hline
CC         & \multicolumn{1}{c|}{81.73\%}        & \multicolumn{1}{c|}{67.84\%}              & \multicolumn{1}{c|}{\textbf{44.68\%}}               & \multicolumn{1}{c|}{81.79\%}        & \multicolumn{1}{c|}{63.15\%}              & \multicolumn{1}{c|}{\textbf{39.52\%}}               & \multicolumn{1}{c|}{81.55\%}        & \multicolumn{1}{c|}{59.51\%}              & \textbf{35.48\%}               \\ \hline
\end{tabular}
}
\end{table}

\begin{table}[]
\caption{Results of Certified Detection Techniques on CIFAR100}\label{tab:cifar100}
\resizebox{0.7\textwidth}{!}{
\begin{tabular}{|c|ccc|ccc|}
\hline
Patch Size & \multicolumn{3}{c|}{0.4\% pixels}                                                              & \multicolumn{3}{c|}{2.4\% pixels}                                                              \\ \hline
Metrics    & \multicolumn{1}{c|}{$acc_{\textit{clean}}$} & \multicolumn{1}{c|}{$acc_{{cert}_d}$}      & $acc_{{cert}_u}$      & \multicolumn{1}{c|}{$acc_{\textit{clean}}$} & \multicolumn{1}{c|}{$acc_{{cert}_d}$}      & $acc_{{cert}_u}$      \\ \hline
MR+        & \multicolumn{1}{c|}{88.09\%}            & \multicolumn{1}{c|}{64.51\%}        & $\oplus$         & \multicolumn{1}{c|}{88.09\%}            & \multicolumn{1}{c|}{44.11\%}        & $\oplus$         \\ \hline
PC         & \multicolumn{1}{c|}{{92.64\%}}           & \multicolumn{1}{c|}{83.32\%}        & $\oplus$         & \multicolumn{1}{c|}{\textbf{92.64\%}}           & \multicolumn{1}{c|}{72.71\%}        & $\oplus$         \\ \hline
ViP        & \multicolumn{1}{c|}{89.74\%}            & \multicolumn{1}{c|}{\textbf{85.37\%}}        & $\oplus$         & \multicolumn{1}{c|}{89.74\%}            & \multicolumn{1}{c|}{\textbf{78.28\%}}        & $\oplus$         \\ \hline
CC         & \multicolumn{1}{c|}{\textbf{92.79\%}}            & \multicolumn{1}{c|}{84.01\%}        & \textbf{52.50\%}        & \multicolumn{1}{c|}{{92.48\%}}          & \multicolumn{1}{c|}{73.18\%}        & \textbf{39.07\%}        \\ \hline
\end{tabular}
}
\end{table}

\begin{table}[]
\caption{Results of \textit{CrossCert} and \textit{CrossCert-base} with Different Base Models on the Three Datasets 
}
\label{tab:cc-base}
\resizebox{\textwidth}{!}{
\begin{tabular}{|lll|lll|ll|ll|}
\hline
\multicolumn{3}{|c|}{Dataset}  & \multicolumn{3}{c|}{ImageNet} & \multicolumn{2}{c|}{CIFAR100}          & \multicolumn{2}{c|}{CIFAR10}            \\ \hline
\multicolumn{3}{|c|}{Patch Size (in pixels)}  &  \multicolumn{1}{c|}{1\%}     & \multicolumn{1}{c|}{2\%}     & \multicolumn{1}{c|}{3\%}      & \multicolumn{1}{c|}{0.4\%}   & \multicolumn{1}{c|}{2.4\%}    & \multicolumn{1}{c|}{0.4\%}   & \multicolumn{1}{c|}{2.4\%}    \\ \hline
\multicolumn{1}{|c|}{\multirow{10}{*}{\rotatebox{90}{ViT}}}    & \multicolumn{1}{c|}{\multirow{4}{*}{CC}}      & $acc_{\textit{clean}}$       &
\multicolumn{1}{l|}{81.73\%} & \multicolumn{1}{l|}{81.79\%} & 81.55\% & \multicolumn{1}{l|}{92.79\%} & 92.48\% & \multicolumn{1}{l|}{98.97\%} & 98.70\% \\ \cline{3-10} 
\multicolumn{1}{|c|}{}                         & \multicolumn{1}{l|}{}                         & $acc_{{cert}_d}$   & \multicolumn{1}{l|}{\hspace{1ex}67.84\%} & \multicolumn{1}{l|}{\hspace{1ex}63.15\%} & \hspace{1ex}59.51\% & \multicolumn{1}{l|}{\hspace{1ex}84.01\%} & \hspace{1ex}73.18\% & \multicolumn{1}{l|}{\hspace{1ex}96.14\%} & \hspace{1ex}91.57\% \\ \cline{3-10} 
\multicolumn{1}{|c|}{}                         & \multicolumn{1}{l|}{}                         & $acc_{{cert}_u}$  & \multicolumn{1}{l|}{\hspace{2ex}44.68\%} & \multicolumn{1}{l|}{\hspace{2ex}39.52\%} & \hspace{2ex}35.48\% & \multicolumn{1}{l|}{\hspace{2ex}52.50\%} & \hspace{2ex}39.07\% & \multicolumn{1}{l|}{\hspace{2ex}78.75\%} & \hspace{2ex}67.73\% \\ \cline{3-10} 
\multicolumn{1}{|c|}{}                         & \multicolumn{1}{l|}{}                         & $acc_{{cert}_r}$    & \multicolumn{1}{l|}{\hspace{3ex}61.86\%} & \multicolumn{1}{l|}{\hspace{3ex}57.83\%} & \hspace{3ex}54.63\% & \multicolumn{1}{l|}{\hspace{3ex}78.40\%} & \hspace{3ex}68.57\% & \multicolumn{1}{l|}{\hspace{3ex}94.03\%} & \hspace{3ex}88.71\% \\ \cline{2-10} 
\multicolumn{1}{|c|}{}                         & \multicolumn{1}{c|}{PL-v2}                         &  $acc_{{cert}_r}$ & \multicolumn{1}{l|}{\hspace{3ex}61.86\%} & \multicolumn{1}{l|}{\hspace{3ex}57.83\%} & \hspace{3ex}54.63\% & \multicolumn{1}{l|}{\hspace{3ex}78.40\%} & \hspace{3ex}68.57\% & \multicolumn{1}{l|}{\hspace{3ex}94.03\%} & \hspace{3ex}88.71\% \\ \cline{2-10} 
\multicolumn{1}{|c|}{}                         & \multicolumn{1}{c|}{\multirow{5}{*}{CC-base}} & $acc_{\textit{clean}}$       & \multicolumn{1}{l|}{81.73\%} & \multicolumn{1}{l|}{81.79\%} & 81.55\% & \multicolumn{1}{l|}{92.79\%} & 92.48\% & \multicolumn{1}{l|}{98.97\%} & 98.70\% \\ \cline{3-10} 
\multicolumn{1}{|c|}{}                         & \multicolumn{1}{l|}{}                         & $acc_{{cert}_d}$   & \multicolumn{1}{l|}{\hspace{1ex}65.19\%} & \multicolumn{1}{l|}{\hspace{1ex}61.07\%} & \hspace{1ex}57.84\% & \multicolumn{1}{l|}{\hspace{1ex}80.85\%} & \hspace{1ex}71.32\% & \multicolumn{1}{l|}{\hspace{1ex}95.57\%} & \hspace{1ex}90.75\% \\ \cline{3-10} 
\multicolumn{1}{|c|}{}                         & \multicolumn{1}{l|}{}                         & $acc_{{cert}_u}$  & \multicolumn{1}{l|}{\hspace{2ex}44.68\%} & \multicolumn{1}{l|}{\hspace{2ex}39.52\%} & \hspace{2ex}35.48\% & \multicolumn{1}{l|}{\hspace{2ex}52.50\%} & \hspace{2ex}39.07\% & \multicolumn{1}{l|}{\hspace{2ex}78.75\%} & \hspace{2ex}67.73\% \\ \cline{3-10} 
\multicolumn{1}{|c|}{}                         & \multicolumn{1}{l|}{}                         & $acc_{{cert}_r}$    & \multicolumn{1}{l|}{\hspace{3ex}61.86\%} & \multicolumn{1}{l|}{\hspace{3ex}57.83\%} & \hspace{3ex}54.63\% & \multicolumn{1}{l|}{\hspace{3ex}78.40\%} & \hspace{3ex}68.57\% & \multicolumn{1}{l|}{\hspace{3ex}94.03\%} & \hspace{3ex}88.71\% \\ \cline{2-10} 
\multicolumn{1}{|c|}{}                         & \multicolumn{1}{c|}{PL-v1}                         &  $acc_{{cert}_r}$ & \multicolumn{1}{l|}{\hspace{3ex}61.86\%} & \multicolumn{1}{l|}{\hspace{3ex}57.83\%} & \hspace{3ex}54.63\% & \multicolumn{1}{l|}{\hspace{3ex}78.40\%} & \hspace{3ex}68.57\% & \multicolumn{1}{l|}{\hspace{3ex}94.03\%} & \hspace{3ex}88.71\% \\ \hline
\multicolumn{1}{|c|}{\multirow{10}{*}{\rotatebox{90}{ResNet}}} & \multicolumn{1}{c|}{\multirow{5}{*}{CC}}      & $acc_{\textit{clean}}$       & \multicolumn{1}{l|}{80.69\%} & \multicolumn{1}{l|}{80.74\%} & 80.57\% & \multicolumn{1}{l|}{88.64\%} & 88.25\% & \multicolumn{1}{l|}{97.75\%} & 97.81\% \\ \cline{3-10} 
\multicolumn{1}{|c|}{}                         & \multicolumn{1}{l|}{}                         & $acc_{{cert}_d}$   & \multicolumn{1}{l|}{\hspace{1ex}63.43\%} & \multicolumn{1}{l|}{\hspace{1ex}57.76\%} & \hspace{1ex}53.94\% & \multicolumn{1}{l|}{\hspace{1ex}78.74\%} & \hspace{1ex}60.62\% & \multicolumn{1}{l|}{\hspace{1ex}91.89\%} & \hspace{1ex}84.23\% \\ \cline{3-10} 
\multicolumn{1}{|c|}{}                         & \multicolumn{1}{l|}{}                         & $acc_{{cert}_u}$  & \multicolumn{1}{l|}{\hspace{2ex}35.84\%} & \multicolumn{1}{l|}{\hspace{2ex}30.44\%} & \hspace{2ex}26.42\% & \multicolumn{1}{l|}{\hspace{2ex}14.29\%} &\hspace{2ex}8.65\%  & \multicolumn{1}{l|}{\hspace{2ex}57.32\%} & \hspace{2ex}43.61\% \\ \cline{3-10} 
\multicolumn{1}{|c|}{}                         & \multicolumn{1}{l|}{}                         & $acc_{{cert}_r}$    & \multicolumn{1}{l|}{\hspace{3ex}58.28\%} & \multicolumn{1}{l|}{\hspace{3ex}53.23\%} & \hspace{3ex}50.25\% & \multicolumn{1}{l|}{\hspace{3ex}66.94\%} & \hspace{3ex}53.56\% & \multicolumn{1}{l|}{\hspace{3ex}88.03\%} & \hspace{3ex}79.42\% \\ \cline{2-10} 
\multicolumn{1}{|c|}{}                         & \multicolumn{1}{c|}{PL-v2}                         &  $acc_{{cert}_r}$ & \multicolumn{1}{l|}{\hspace{3ex}58.28\%} & \multicolumn{1}{l|}{\hspace{3ex}53.23\%} & \hspace{3ex}50.25\% & \multicolumn{1}{l|}{\hspace{3ex}66.94\%} & \hspace{3ex}53.56\% & \multicolumn{1}{l|}{\hspace{3ex}88.03\%} & \hspace{3ex}79.42\% \\ \cline{2-10} 
\multicolumn{1}{|c|}{}                         & \multicolumn{1}{c|}{\multirow{5}{*}{CC-base}} & $acc_{\textit{clean}}$       & \multicolumn{1}{l|}{80.69\%} & \multicolumn{1}{l|}{80.74\%} & 80.57\% & \multicolumn{1}{l|}{88.64\%} & 88.25\% & \multicolumn{1}{l|}{97.75\%} & 97.81\% \\ \cline{3-10} 
\multicolumn{1}{|c|}{}                         & \multicolumn{1}{l|}{}                         & $acc_{{cert}_d}$   & \multicolumn{1}{l|}{\hspace{1ex}60.78\%} & \multicolumn{1}{l|}{\hspace{1ex}55.62\%} & \hspace{1ex}52.38\% & \multicolumn{1}{l|}{\hspace{1ex}68.47\%} & \hspace{1ex}54.99\% & \multicolumn{1}{l|}{\hspace{1ex}90.10\%} & \hspace{1ex}82.18\% \\ \cline{3-10} 
\multicolumn{1}{|c|}{}                         & \multicolumn{1}{l|}{}                         & $acc_{{cert}_u}$  & \multicolumn{1}{l|}{\hspace{2ex}35.84\%} & \multicolumn{1}{l|}{\hspace{2ex}30.44\%} & \hspace{2ex}26.42\% & \multicolumn{1}{l|}{\hspace{2ex}14.29\%} & \hspace{2ex}8.65\%  & \multicolumn{1}{l|}{\hspace{2ex}57.32\%} & \hspace{2ex}43.61\% \\ \cline{3-10} 
\multicolumn{1}{|c|}{}                         & \multicolumn{1}{l|}{}                         & $acc_{{cert}_r}$    & \multicolumn{1}{l|}{\hspace{3ex}58.28\%} & \multicolumn{1}{l|}{\hspace{3ex}53.23\%} & \hspace{3ex}50.25\% & \multicolumn{1}{l|}{\hspace{3ex}66.94\%} & \hspace{3ex}53.56\% & \multicolumn{1}{l|}{\hspace{3ex}88.03\%} & \hspace{3ex}79.42\% \\ \cline{2-10} 
\multicolumn{1}{|c|}{}                         & \multicolumn{1}{c|}{PL-v1}                        &  $acc_{{cert}_r}$ & \multicolumn{1}{l|}{\hspace{3ex}58.28\%} & \multicolumn{1}{l|}{\hspace{3ex}53.23\%} & \hspace{3ex}50.25\% & \multicolumn{1}{l|}{\hspace{3ex}66.94\%} & \hspace{3ex}53.56\% & \multicolumn{1}{l|}{\hspace{3ex}88.03\%} & \hspace{3ex}79.42\% \\ \hline

\end{tabular}
}
\end{table}
\subsection{Experimental Results and Data Analysis}

\subsubsection{Answering RQ1.} 
Tables \ref{tab:cifar10}, \ref{tab:imagenet}, and \ref{tab:cifar100} summarize the results.
where we use the symbol $\oplus$ to represent the concerned defender to be theoretically infeasible to offer the corresponding certification. 
In each table, the first row shows the patch region size as a percentage of the sample size.
For each patch region size, the three columns represent the clean accuracy, certifiably detectable accuracy, and certifiably unwavering accuracy achieved by the defender on the same row.
The highest number in each column is highlighted.
CC in all these three tables refers to CC-ViT.

\textit{The findings on CIFAR10}: In Table \ref{tab:cifar10}, 
CC is the only defender that can give unwavering certification.
It achieves 78.75\% and 67.73\% in terms of $acc_{{cert}_u}$ for the patch region sizes of 0.4\% and 2.4\%, respectively.
CC, ViP, PC, and PC$_\star$ are competitive with one another in terms of $acc_{\textit{clean}}$ and $acc_{{cert}_d}$ when the patch region size is 0.4\%, where their differences in accuracy are all within 2\%. 
(Note that CC adopts Patchcleanser with Alg. \ref{alg:patchcleanser_revision} for label prediction as $R_1$.)
They generally have higher accuracy in terms of $acc_{{cert}_d}$ than MR$_\star$, PG++$_\star$, SC$_\star$, MR+$_\star$, and MR+.
If the patch region increases to 2.4\%, ViP stands out in terms of $acc_{{cert}_d}$, which is 3.14\% higher than CC and 4.49\% higher than PC.%
\footnote{\label{foot:cifar10}%
We note that ViP and PC use the same criterion to certify a sample as certifiably detectable but with different treatments to realize the same masking strategy on top of the same Visual Transformer architecture.
Furthermore, ViP uses a more advanced pretraining and fine-tuning scheme (i.e., MAE), where the Visual Transformer model after the MAE pre-training and fine-tuning achieves better performance than the one before in the MAE's original experiment \cite{he2022masked}. 
We conjecture that these differences contribute significantly to the differences in accuracy between PC and ViP we observed in our experiment. However, more experiments are needed to confirm or reject the conjecture.
}
CC and PC achieve similar accuracy in terms of $acc_{{cert}_d}$.
Unlike all peer defenders, CC is unique in issuing warnings on the labels returned by a recovery defender ($R_1$) and offers a theoretical guarantee on these recovered labels with suppression of the warnings.

\textit{The findings on ImageNet}: In Table \ref{tab:imagenet}, ViP$_\star$ stands out in terms of $acc_{\textit{clean}}$ and $acc_{{cert}_d}$ for {all three patch region sizes.}
CC,  ViP, and PC achieve similar accuracy in terms of $acc_{\textit{clean}}$ (within 2\% difference).
CC and PC have similar accuracy in terms of $acc_{{cert}_d}$ (around 3.0\%) for the three patch region sizes, but they have lower accuracy than ViP. We have discussed a similar observation in footnote \ref{foot:cifar10} when reporting the findings on CIFAR10 above. 
PC$_\star$ has higher accuracy than CC in all columns. 
MR+, PG++$_\star$, SC$_\star$ and MR+$_\star$ achieve lower accuracy in terms of $acc_{\textit{clean}}$ and $acc_{{cert}_d}$ than CC. Like Table \ref{tab:cifar10}, only CC can provide unwavering certification, achieving 44.68\%, 39.52\%, and 35.48\% in terms of $acc_{{cert}_u}$ for the patch region sizes of 1\%, 2\%, and 3\%, respectively.

\textit{The findings on CIFAR100}: In Table \ref{tab:cifar100}, CC achieves similar accuracy to PC in terms of $acc_{\textit{clean}}$ (within 0.2\% difference) for the patch region sizes of 0.4\% and 2.4\%, which are higher than ViP by 2.90\% {on} average. 
ViP achieves the highest $acc_{{cert}_d}$ for both patch region sizes. 
CC is also more accurate in terms of $acc_{{cert}_d}$ than MR+.
Only CC can provide unwavering certification with 52.50\% and 39.07\% in terms of $acc_{{cert}_{u}}$ for the patch region sizes of 0.4\% and 2.4\%, respectively.

Overall, with the additional constraint for providing unwavering certification, in terms of $acc_{\textit{clean}}$ and $acc_{{cert}_d}$, CC incurs lower accuracy than ViP and is competitive with  PC in performance on all three datasets. Furthermore, it gains non-trivial certifiably unwavering accuracy.

\subsubsection{Answering RQ2.} Table \ref{tab:cc-base} shows the accuracy results for CC and CC-base and their primary recovery models (i.e., $R_1$), which are denoted by \textbf{PL-v2} and \textbf{PL-v1}, respectively, standing for PatchCleanser with Alg. \ref{alg:patchcleanser_revision} and PatchCleanser with Alg. \ref{alg:patchcleanser_original} (i.e., the original PatchCleanser) for label prediction, respectively, where $R_1$ is configured with the base model (ViT or ResNet).
Table \ref{tab:cc-base} shows the datasets and patch region sizes in the first two rows. 
Each other row shows the accuracy (indicated by the third column) achieved by a defender (indicated in the second column) using the base model (indicated in the first column) to configure the recovery defender (PL-v2, PL-v1, and $R_1$ for CC and CC-base).
CC and CC-base directly adopt PL-v2 and PL-v1 as $R_1$, respectively.  
Their clean accuracy and certifiably recoverable accuracy are the same as those for PL-v2 and PL-v1, respectively. 
We report their certifiably recoverable accuracy for validation.

We have several observations from Table \ref{tab:cc-base}.
First, using ViT as the base model makes CC-ViT and CC-base-ViT consistently outperform their counterparts that use ResNet as the base model (CC-RN and CC-base-RN) in each column.
This observation is consistent with the findings in \cite{patchcensor} for detection defenders and those in \cite{salman2022certified} for recovery defenders that ViT outperforms ResNet for patch robustness certification by serving as the base models of defenders.
Second, using ViT and ResNet as the base models, 
 CC-ViT and CC-RN also consistently outperform CC-base-ViT and CC-base-RN in terms of $acc_{cert_{d}}$ in each column, respectively, which shows that 
 the deeper analysis on the recovery semantics made by \textit{CrossCert} pays off.
 Also, the certifiably detectable accuracy for using each of CC-ViT, CC-RN, CC-base-ViT, and CC-base-RN is consistently higher than
 the corresponding certifiably recoverable accuracy and certifiably unwavering accuracy of the same defender in the same column.
 Third, as expected (and discussed in Section~\ref{sec:analysis-masking-recovery}), for each of ViT and ResNet as the base model,
 the certifiably recoverable accuracy for CC and that for CC-base are consistently the same as these for PL-v2 and PL-v1, and 
 the certifiably unwavering accuracy for CC and that for CC-base are also consistently the same.
 Last but not least, 
an interesting observation is that, for using each of ViT and ResNet as the base model, CC and CC-base share the same clean accuracy in each column, indicating that their underlying PL-v2 and PL-v1 also have the same clean accuracy.
We conjecture that Alg \ref{alg:patchcleanser_original} and Alg. \ref{alg:patchcleanser_revision} always return the same label for the same sample.
We leave the theoretical investigation for this conjecture as future work.
%

\subsection{Threats to Validity}
We evaluate three top-performing peer defenders and CC in the experiments and follow common practice in the latest previous studies \cite{patchcensor,li2022vip} on certified detection to compare the {reported results (in numbers)} of more defenders reported in the literature on certified detection due to our limited GPU resources. 
We evaluate the defenders in four metrics
$acc_{\textit{clean}}$, $acc_{{cert}_d}$, $acc_{{cert}_r}$, and $acc_{{cert}_u}$.
The first three are popularly measured in previous experiments \cite{patchcensor,li2022vip, xiang2022patchcleanser}, and the fourth one is original. 
Some previous work (MR \cite{mccoyd2020minority}, PG++ \cite{xiang2021patchguard++}, and SC \cite{han2021scalecert}) reported their results on metrics with the same names as our metrics but excluding those benign samples with warnings produced from a defender when calculating the corresponding accuracy.
We follow \cite{patchcensor,li2022vip} to replicate their reported results into our tables with the same metric names for comparison and remind readers in this section that they have different definitions.
{Another interpretation of 
$acc_{{cert}_u}$ is that it measures
the extent of \textit{true negatives} of certified detection guaranteed by a defender. 
{We believe that using an independent metric on true negatives can be more comprehensive in evaluating a defender.}
Lowering the certified accuracy for detection to that for recovery is undesirable as such detectors cannot achieve state-of-the-art certified detection performance.}
We are aware that conducting experiments with more peer defenders configured with more base models on more datasets and on more metrics can further strengthen the validity of the experiment.
On CIFAR100, CC with ResNet achieves a low performance. We believe that the hyperparameters we used for it, as stated in Section \ref{sec:implement}, are a threat on the performance. 
The time costs incurred by CC to certify samples are around 9 hours for CIFAR10 and CIFAR100 and 41 hours for ImageNet, 
in which $R_1$, $R_2$, and the \textit{CrossCert} framework account for 
83.3\%, 16.5\%, and 0.2\% on average, respectively.
Like \cite{patchcensor,han2021scalecert,xiang2021patchguard++},
we only conduct the experiments in single-patch situations.
We also only conduct the experiments on three datasets with limited patch sizes similar to \cite{han2021scalecert,levine2020randomized,xiang2021patchguard++,chiang2020certified}. We have generalized \textit{CrossCert} to two popular backbones and leave the further generalization as future work.
Similar to the evaluations of existing robustness certification defenders \cite{li2022vip,patchcensor,xiang2021patchguard++,xiang2022patchcleanser,salman2022certified,xiang2021patchguard}, we have not evaluated \textit{CrossCert} on real adversarial attacks.
The implementation of all techniques 
may contain bugs. 
We have conducted testing and resolved bugs that were identified.

\section{Related work}\label{sec:rel}
Extensive research has been done to defend against adversarial patch attacks. Various empirical defenses have been proposed \cite{naseer2019local,hayes2018visible}, but their defenses can easily be compromised if an attacker is aware of their defense strategies \cite{chiang2020certified}. Consequently, to construct a defense against such situations, the class of robustness certification techniques against patch attacks emerges \cite{mccoyd2020minority,xiang2021patchguard++,patchcensor,han2021scalecert,li2022vip,chiang2020certified,levine2020randomized,salman2022certified,chen2022towards,lin2021certified,xiang2021patchguard,metzen2021efficient,zhang2020clipped,xiang2022patchcleanser}.  Certified recovery \cite{chiang2020certified,levine2020randomized,salman2022certified,chen2022towards,lin2021certified,xiang2021patchguard,metzen2021efficient,zhang2020clipped,xiang2022patchcleanser} aims higher by recovering the labels of benign samples in the presence of the adversarial patch, while certified detection \cite{mccoyd2020minority,xiang2021patchguard++,patchcensor,han2021scalecert,li2022vip} focuses on providing warnings against such a patch \cite{xiang2021patchguard,xiangshort}.

Chiang et al. \cite{chiang2020certified} propose the first work on certified recovery technique against patch attacks with IBP \cite{gowal2019scalable}. Subsequently, several voting-based techniques (DRS \cite{levine2020randomized}, ViT \cite{salman2022certified}, ViP \cite{li2022vip}, and ECViT \cite{chen2022towards}) utilize the margin of majority votes among the smoothing ablations to provide certification guarantees. 
Vision Transformer-based techniques \cite{chen2022towards, salman2022certified,li2022vip} are empirically found to achieve higher certified accuracy than convolutional neural network-based techniques.
PG \cite{xiang2021patchguard}, MC \cite{zhou2023majority}, and BagCert \cite{metzen2021efficient} explore the constraints among the ablations to eliminate certain infeasible cases to theoretically enhance the certified accuracy.  
Xiang et al. \cite{xiang2022patchcleanser} propose the first work (PatchCleanser) in the class of masking-based certified recovery to recover benign labels. Unlike them, \textit{CrossCert} contributes to differentiate warning outputs by recovery semantics through Alg. \ref{alg:patchcleanser_revision}.

A stringent condition imposed by a recovery guarantee usually results in a more limited certifiably recoverable accuracy compared to clean accuracy, making recovery defenders challenging for uses in the real world \cite{patchcensor} and motivating subsequent research to study certified detection. 
MR \cite{mccoyd2020minority} firstly introduces a mask-sliding mechanism for detection certification, which achieves a relatively low certifiably detectable accuracy and limited scalability. 
PG++ \cite{xiang2021patchguard++} and SC \cite{han2021scalecert} introduce feature masking and neural network compression strategies, improving their scalability compared to MR. PC and ViP \cite{patchcensor, li2022vip} further refine the masking mechanism and adapt ViT as their backbones, resulting in higher performances. Unlike \textit{CrossCert}, none of them explores the possibility of cross-checking between recovery defenders to provide detection certification. They provide \emph{no} unwavering certification,
incurring a greater chance of invoking a fall-back strategy to resolve warnings, lowering their support for automation.

\section{Conclusion}\label{sec:con}
We have presented a novel cross-check detection framework \textit{CrossCert} for a novel certification known as the unwavering certification.
We have proven its correctness and conducted experiments to show its high effectiveness in providing unwavering certification and detection certification.
We leave the improvement of its efficiency, the generalization of building it atop a set of defenders, and the generalization of it for robustness certification across different adversaries as future work.

\section*{Data availability}
Our replication package \cite{CrossCert_github} is available on GitHub. ImageNet can be downloaded at \url{image-net.org}, and other datasets will automatically be downloaded by the script. 

\clearpage
\bibliographystyle{ACM-Reference-Format}

\end{document}